\documentclass{amsart}
\usepackage[margin=3.18cm, top=2.54cm, bottom=2.54cm]{geometry}

\allowdisplaybreaks

\usepackage{caption}
\usepackage[leftcaption]{sidecap}

\usepackage{enumitem}
\setlist[enumerate,1]{label=\textbf{\textup{(\roman*)}}}
\setlist{
  leftmargin=1cm,
}

\usepackage[
  backend=biber,
  style=alphabetic,
  giveninits=true, 
  backref=true
]{biblatex}
\usepackage{lmodern}
\usepackage{anyfontsize}
    
\usepackage{multirow}
\usepackage{hhline}

\usepackage{tabularray}
\UseTblrLibrary{booktabs}

\usepackage{adjustbox}
\usepackage{tcolorbox}

\newtcolorbox{standout}{
  colback=gray!15,
  boxrule=0pt,
  left=.3cm,
  right=.3cm,
  top=.18cm,
  bottom=.18cm,
  boxsep=0pt
}

\usepackage{mathtools}
\usepackage{amssymb}
\usepackage{amsthm}
\usepackage{xfrac}
\usepackage{stmaryrd} 
\usepackage{scalerel} 
\usepackage{stackengine} 

 \newcommand{\bracket}[3]{%
  \stretchleftright
    {#1}
    {%
      \ensurestackMath{\addstackgap[1pt]{#2}}%
      \vrule width 0pt depth 2pt height 0pt
    }
    {#3}%
} 
\newcommand{\scaledbracket}[3]{%
  \ThisStyle{%
    \stretchleftright
      {#1}
      {
        \ensurestackMath{\addstackgap[1pt]{\SavedStyle #2}}%
        \vrule width 0pt depth 1.5pt height 0pt
      }
      {#3}%
  }%
}
\newcommand{\bracketmid}[4]{%
  \stretchleftright{#1}{%
    \ensurestackMath{%
      \addstackgap[2pt]{#2}%
      \,\stretchrel*{|}{\addstackgap[2pt]{#2#3}}\,%
      \addstackgap[2pt]{#3}%
    }%
  }{#4}%
}

\theoremstyle{plain}
\newtheorem{theorem}{Theorem}[section]
\newtheorem{lemma}[theorem]{Lemma}
\newtheorem{proposition}[theorem]{Proposition}

\theoremstyle{definition}

\newtheorem{example}[theorem]{Example}

\theoremstyle{remark}
\newtheorem{remark}[theorem]{Remark}
\usepackage[
  colorlinks=true,
  linkcolor=darkgreen,
  citecolor=darkgreen,
  urlcolor=darkgreen
]{hyperref}
\usepackage{xurl} 

\usepackage{cleveref}
\crefname{equation}{}{}
\crefname{section}{\S}{\S\S}
\crefname{subsection}{\S}{\S\S}
\crefname{subsubsection}{\S}{\S\S}
\crefname{definition}{Def.}{Defs.}
\crefname{theorem}{Thm.}{Thms.}
\crefname{corollary}{Cor.}{Cors.}
\crefname{lemma}{Lem.}{Lems.}
\crefname{proposition}{Prop.}{Props.}
\crefname{remark}{Rem.}{Rems.}
\crefname{notation}{Ntn.}{Ntns.}
\crefname{fact}{Fact}{Fact}
\crefname{example}{Ex.}{Exs.}
\crefname{figure}{Fig.}{Figs.}
\crefname{table}{Tab.}{Tabs.}
\crefname{footnote}{ftn.}{ftns.}
\Crefname{footnote}{Ftn.}{Ftns.}

\usepackage{tikz}
\usetikzlibrary{
  cd,
  calc,
  arrows.meta,
  backgrounds,
  decorations,
  decorations.pathmorphing, 
}

\tikzcdset{
  arrow style=tikz, 
  diagrams={
    trim left=
    {([xshift=5pt]current bounding box.west)},
    trim right=
    {([xshift=-5pt]current bounding box.east)},
    baseline=
    {([yshift=-2pt]current bounding box.center)},
    >={
      Computer Modern Rightarrow[
        length=4pt, width=4pt
      ]
    },
    hook/.style={
      {Hooks[right, length=2pt, width=8pt]}->
    },
    hook'/.style={
      {Hooks[left, length=2pt, width=8pt]}->
    },
    shorten=0pt,
  }
}

\definecolor{darkblue}{rgb}{0.05,0.25,0.65}
\definecolor{darkgreen}{RGB}{20,140,10}
\definecolor{lightgray}{rgb}{0.9,0.9,0.9}
\definecolor{darkorange}{RGB}{200,100,5}
\definecolor{darkyellow}{rgb}{.91,.91,0}
\definecolor{lightolive}{RGB}{225, 220, 185}

\let\originalsslash\sslash
\renewcommand{\sslash}{\mathord{\originalsslash}}

\makeatletter
\newcommand{\cpt}{\mathpalette\cpt@inner\relax}
\newcommand{\cpt@inner}[2]{%
  \scalebox{0.5}[0.9]{$#1\cup$}
  #1\{\infty\}
}
\makeatother

\tikzset{
  snake left/.style={
    rounded corners,
    to path={
      let \p1 = (\tikztostart.east),
          \p2 = (\tikztotarget.west),
          \p3 = ($(\p1)!0.5!(\p2)$),
          \n1 = {8pt} 
      in
      (\p1)
      -- (\x1 + \n1, \y1)
      -- (\x1 + \n1, \y3)
      -- (\x2 - \n1, \y3) \tikztonodes
      -- (\x2 - \n1, \y2)
      -- (\p2)
    }
  }
}

\tikzset{
  uphordown/.style={
    rounded corners,
    to path={
      let \p1 = (\tikztostart.north),
          \p2 = (\tikztotarget.north),
          \n1 = {max(\y1,\y2) + 8pt}
      in
      (\p1)
      -- (\x1, \n1)
      -- (\x2, \n1) \tikztonodes 
      -- (\p2)
    }
  }
}

\tikzset{
  downhorup/.style={
    rounded corners,
    to path={
      let \p1 = (\tikztostart.south),
          \p2 = (\tikztotarget.south),
          \n1 = {min(\y1,\y2) - 8pt}
      in
      (\p1)
      -- (\x1, \n1)
      -- (\x2, \n1) \tikztonodes 
      -- (\p2)
    }
  }
}

\tikzset{
  rightvertleft/.style={
    rounded corners,
    to path={
      let \p1 = (\tikztostart.east),
          \p2 = (\tikztotarget.east),
          \n1 = {max(\x1,\x2) + 8pt}
      in
      (\p1)
      -- (\n1, \y1)
      -- (\n1, \y2) \tikztonodes 
      -- (\p2)
    }
  }
}

\tikzset{
  leftvertright/.style={
    rounded corners,
    to path={
      let \p1 = (\tikztostart.west),
          \p2 = (\tikztotarget.west),
          \n1 = {min(\x1,\x2) - 8pt}
      in
      (\p1)
      -- (\n1, \y1)
      -- (\n1, \y2) \tikztonodes 
      -- (\p2)
    }
  }
}

\newcommand{\inlinetikzcd}[1]{\begin{tikzcd}[sep=small, ampersand replacement=\&]#1\end{tikzcd}}

\newcommand{\acts}{%
  \hspace{1.3pt}%
  \raisebox{1.2pt}{%
    \rotatebox[origin=c]{90}{$%
      \curvearrowright%
    $}%
  }%
  \hspace{.7pt}%
}

\renewcommand{\setminus}{-}

\newcommand{\backsslash}{\backslash\mspace{-6.5mu}\backslash}

\newcommand{\defneq}{\equiv}

\newcommand{\HilbertSpace}{%
  \mathcal{H}%
}

\newcommand{\fusion}{\widetilde\otimes}

\newcommand{\FusionMultiplicity}{R}

\begin{document}

\author{Hisham Sati}
\address{Center for Quantum \& Topological Systems, New York Univerity Abu Dhabi 
\& The Courant Institute for Mathematical Sciences, New York University}
\email{hsati@nyu.edu}

\author{Urs Schreiber}
\address{Center for Quantum \& Topological Systems, New York Univerity Abu Dhabi}
\email{us13@nyu.edu}

\thanks{This research was supported by {\it Tamkeen UAE} under the {\it Abu Dhabi Research Institute} grant {\tt CG008}.}

\title{Drinfeld Center as Quantum State Monodromy over Bloch Hamiltonians around Defects}

\begin{abstract}
The Drinfeld center fusion category $\mathcal{Z}\bracket({\mathrm{Vec}_G})$ famously models anyons in certain lattice models. Here we demonstrate how its fusion rules may also describe topological order in fractional topological insulator materials, in the vicinity of point defects in the Brillouin zone.

Concretely, we prove that $\mathcal{Z}\bracket({\mathrm{Vec}_G})$ reflects, locally over a punctured disk in the Brillouin zone, the monodromy (topological order) of gapped quantum states over the parameter space of Bloch Hamiltonians whose classifying space has fundamental group $G$.
\end{abstract}

\maketitle

\tableofcontents

\newpage

\section{Introduction}
\label{Introduction}

The \emph{Drinfeld center}  $\mathcal{Z}\bracket({\mathrm{Vec}_G})$
of the category of $G$-graded vector spaces
(recalled in \cref{OnTheDrinfeldCenter}) 
is a basic example of a \emph{fusion category} (cf. \cite{ENO2005})
\footnote{
  By the finiteness condition on the set of simple objects of a fusion category, $\mathcal{Z}\bracket({\mathrm{Vec}_G})$ is fusion when $G$ is a finite group, as is well-known. While our results all hold in the generality where $G$ may be non-finite, we will generally refer to $\mathcal{Z}\bracket({\mathrm{Vec}_G})$ as a fusion category, in order not to overburden the terminology.
}
and as such has received much attention (originating with \cite{Kitaev2003}, cf. \cite{BolsVadnerkar2025}) as a category of \emph{anyon} species (cf. \cite{Goldin2023}) in lattice models (\parencites[\S 4]{Kitaev2003}[\S E]{Kitaev2006}, cf. \cite[\S 29]{Simon2023}).

However, while there are many \emph{theoretical} models for anyons like this, a single anyon model currently stands out as being consistently observed in actual experiments, in recent years (starting with \cite{Bartolomei2020,Nakamura2020}, review in \cite{FeldmanHalperin2021}, most recent confirmations in \cite{Veillon2024,Ghosh2025}): namely \emph{quasi-holes} in fractional quantum Hall (FQH) liquids (cf. \cite{Stormer99,PapicBalram2024}). 

Yet more recently, the ``anomalous'' version (FQAH) of the fractional quantum Hall effect has been experimentally observed in crystalline quantum materials known as \emph{fractional Chern insulators} (FCI, cf. \cite{Chang2023,zhao2025exploring}). This is potentially of great practical relevance for future applications, notably to topological quantum computing hardware (cf. \cite{Nayak2008,SatiValera2025}), since the FQAH effect is seen under much more practical conditions (requiring less extreme cooling and no extreme magnetic field).
Therefore, an acutely relevant open question, both experimentally and theoretically, is the potential nature of anyonic topological order in such FCI materials (cf. \cite{SS25-FQAH,SS25-Crys}).

However, the topological phases of such crystalline insulators are naturally described not by lattice models, but by the homotopy of their \emph{Bloch Hamiltonians} (cf. \parencites[\S XIII.16]{ReedSimon1978}{Sergeev2023}): These are continuous maps from the \emph{Brillouin zone} $\Sigma^2$ of electron momenta in the crystal (cf. \cite[\S 2.1]{Thiang2025}), to a (classifying) space $\mathcal{A}$ of admissible Hamiltonians modeling the internal degrees of freedom of electrons of fixed quasi-momentum. 

Moreover (\cite[\S I-II]{SS25-Crys}, recalled in \cref{OnTopologicalOrderAndParameterMonodromy} below), since the single-electron Bloch Hamiltonian serves as the external parameter of crystal couplings on which the interacting electron quantum ground states must adiabatically depend, any topological order in these systems ought to manifest as representations of the fundamental groups in the space of Bloch Hamiltonians $H$.

Our main result here (in \cref{LocalParameterMonodromyViaCenter}) is a proof that, \emph{locally around point defects} in the Brillouin torus (such as around band nodes), this \emph{parameter monodromy} over spaces of Bloch Hamiltonians is faithfully reflected by the simple objects and their fusion rules in $\mathcal{Z}\bracket({\mathrm{Vec}_G})$.



\section{Topological Order and Parameter Monodromy}
\label{OnTopologicalOrderAndParameterMonodromy}

To start with, we recall, with \cite[\S I-II]{SS25-Crys}, that \emph{topological order} of quantum materials \cite{Wen1991,Wen1995}, exhibiting anyonic quantum states, is generally about nontrivial \emph{monodromy} of gapped quantum ground states with respect to adiabatic transport along paths of external parameters (\emph{local systems} of ground state Hilbert spaces over parameter space, cf. \parencites[Lit. 2.22 \& \S 3]{MySS2024}[\S 2.2]{SS26-EoS}). Then we specialize this to the situation of interest here, where the parameter space is that of Bloch Hamiltonians in the vicinity of a defect in momentum space.

\subsection{Classifying Spaces of Bloch Hamiltonians}
\label{OnClassifyingSpacesOfBlochHamiltonians}

We consider a path-connected topological space $\mathcal{A}$ admitting the structure of a CW-complex, to be regarded as a classifying space for Bloch Hamiltonians. Typical examples, classifying fragile topological phases \cite{Bouhon2020} of sequences of groups of gapped electron bands, are flag manifolds of the form
\begin{equation}
  \mathcal{A}
  \sim
  U\big/\bracket({
    U_1 \times \cdots \times U_k
  })
  \mathrlap{\,,}
\end{equation}
where $U \subset \mathrm{U}(\HilbertSpace)$ is a subgroup of unitary transformations (of the single electron's internal quantum states) preserved by the material's quantum symmetries, and the $U_i \subset U$ are the disjoint symmetry subgroups under which the gapped groups of bands transform.

\begin{example}
Consider the simple but important case of 2-band (fractional) Chern insulators (cf. \cite[\S A.1]{Chang2023}, which assumes that a single valence band and a single conduction band are accessible to the system under deformations, subject to no other symmetry constraints). Here we have that the classifying space is homotopy equivalent \eqref{HomotopyEquivalenceRelation} to the 2-sphere (cf. \parencites[(8.3-4)]{Sergeev2023}[(4)]{SS25-FQAH}):
\begin{equation}
  \label{S2Classifying2BandChernInsulators}
  \mathcal{A}
  \sim
  \mathrm{U}(2)\big/
  \bracket({
    \mathrm{U}(1)
    \times
    \mathrm{U}(1)
  })
  \simeq
  \mathbb{C}P^1
  \simeq
  S^2
  \mathrlap{.}
\end{equation}
\end{example}
Simplistic as this classifying space may appear, it plays a remarkable role in witnessing FQH-type anyons in the momentum space of FQAH systems, see \cref{HeisenbergGroupsAsPi1} below. 

But for the purposes of the present article,  we will instead be focused on examples of classifying spaces that have trivial $\pi_2$ but interesting $\pi_1$, such as the following:
\begin{example} 
  \label[example]{PTSymmetric3BandSystems}
  The classifying space of Bloch Hamiltonians for a PT-symmetric crystalline system with 3 gapped bands (considered in \parencites{WSB2019}[pp. 11]{Bouhon2023}) is:
  \begin{equation}
    \mathcal{A}
    \sim
    \mathrm{O}(3)\big/\bracket({
      \mathrm{O}(1)^3
    })
    \simeq
    \mathrm{SO}(3)
    \big/D_2
  \end{equation}
  This space has trivial $\pi_2$, and its fundamental group is the \emph{quaternion group}:
  \begin{equation}
    \pi_1(\mathcal{A})
    \simeq
    \bracket\{{
      \pm 1,
      \pm \mathrm{i},
      \pm \mathrm{j},
      \pm \mathrm{k}
    }\}
    \subset 
    S(\mathbb{H})
    \,,
    \;\;\;\;\;\;
    \pi_2(\mathcal{A}) = 0
    \mathrlap{\,.}
  \end{equation}
\end{example}

\subsection{Homotopy of spaces of Bloch Hamiltonians}
\label{OnHomotopyOfSpacesOfBlochHamiltonians}

With a Bloch Hamiltonian classifying space $\mathcal{A}$ given (\cref{OnClassifyingSpacesOfBlochHamiltonians}),
and denoting by $\Sigma^2$ a domain of crystal momenta, the space of Bloch Hamiltonians on the domain with the given properties is homotopy equivalent \cref{HomotopyEquivalenceRelation} to the following mapping space:
\begin{equation}
  \big\{
    \text{Bloch Hamiltonians on $\Sigma^2$ classified by $\mathcal{A}$}
  \big\}
  =
  \mathrm{Map}\bracket({\Sigma^2, \mathcal{A}})
  \defneq
  \big\{
  \begin{tikzcd}[sep=15pt]
    \Sigma^2
    \ar[r, dashed]
    &
    \mathcal{A}
  \end{tikzcd}
  \big\}
  \mathrlap{\,.}
\end{equation}
Since Bloch Hamiltonians encode the crystal/potential structure seen by valence electrons, this may be regarded as the space of classical external parameters for the multi-electron quantum system.

But this means \cite[\S I-II]{SS25-Crys}, by the \emph{quantum adiabatic theorem} (cf. \cite{RigolinOrtiz2012}), that gapped quantum ground states of interacting electrons form \emph{local systems} (cf. \parencites[Lit. 2.22]{MySS2024}[\S 2.2]{SS26-EoS}) of Hilbert spaces over this external parameter space, hence here form representations of the fundamental groups $\pi_1$ of this mapping space (cf. Fig. \ref{ParameterMonodromySchematics}):
\begin{enumerate}
  \item 
  $[p] \in \pi_0\bracket({\mathrm{Map}\bracket({{\Sigma^2, \mathcal{A}}})})$
  is a \emph{topological phase},

  \item
  $\pi_1\bracket({\mathrm{Map}\bracket({\Sigma^2,\mathcal{A}}), p})$
  is the \emph{parameter monodromy}
  in that phase, signifyin \emph{topological order} (cf. Fig. \ref{BlochMonodromySchematics}). 
\end{enumerate}

\begin{SCfigure}[2][htb]
\centering
\caption{
  \label{ParameterMonodromySchematics}
  The \emph{quantum adiabatic theorem} entails that gapped ground states $\HilbertSpace$ undergo unitary transformations $U_\gamma$ when classical parameters $p$ move along paths $\gamma$ in parameter space. If these $U_\gamma$ depend only on the homotopy class of $\gamma$ (relative endpoints) then nontrivial such transformations signify \emph{topological order} of the quantum phase. Mathematically this means that the Hilbert spaces $\HilbertSpace_p$ constitute a \emph{local system} or \emph{flat bundle} over parameter space, equivalently a representation of the \emph{fundamental group} of closed paths (loops) $\ell$ at any base point $p_0$.
}
\centering
\adjustbox{
  rndfbox=4pt
}{
$
\begin{tikzcd}[
  decoration=snake,
  row sep=20pt
]
  &
  \HilbertSpace_{p_2}
  \ar[
    dr,
    shorten=-2pt,
    "{ U_{\gamma_{{}_{23}}} }"{sloped}
  ]
  \\[-11pt]
  \HilbertSpace_{\mathrlap{p_{{}_{1}}}}
  \ar[
    rr,
    shorten >=-1pt,
    black,
    "{ U_{\gamma_{{}_{13}}} }"
  ]
  \ar[
    ur,
    shorten >=-2pt,
    "{
      U_{\gamma_{{}_{12}}}
    }"{sloped}
  ]
  &&
  \HilbertSpace_{p_{{}_3}}
  &[-13pt]
  \HilbertSpace_{p_{{}_0}}
  \ar[
    in=45+90,
    out=-45+90,
    looseness=4.7,
    shift left=2pt,
    shorten <=-2pt,
    "{ U_\ell }"
  ]
  \\[-20pt]
  &
  p_2
  \ar[
    dr,
    decorate,
    shorten <=-2pt,
    shorten >=-4pt,
    "{ \gamma_{{}_{23}} }"{yshift=2pt, sloped}
  ]
  &
  \\[-11pt]
  p_1
  \ar[
    rr, 
    decorate,
    shift right=1pt,
    shorten <=-2pt,
    shorten >=-3pt,
    "{
      \gamma_{{}_{13}}
    }"{yshift=2pt}
  ]
  \ar[
    ur,
    decorate,
    shorten <=-2pt,
    shorten >=-2pt,
    "{ 
      \gamma_{{}_{12}} 
    }"{yshift=2pt, sloped}
  ]
  &&
  p_{{}_3}
  &
  p_{{}_0}
  \ar[
      in=52+90,
      out=-55+90,
      looseness=5,
      shift left=4pt,
      decorate,
      shorten <=-2pt,
      "{ \ell }"
  ]
\end{tikzcd}
$
\hspace{-15pt}
}
\end{SCfigure}
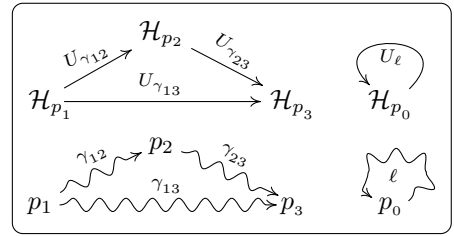

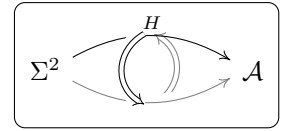
\begin{SCfigure}[3][htb]
\caption{\label{BlochMonodromySchematics}
For interacting electrons in crystalline materials, their external parameter is the crystal lattice structure encoded in the single-electron Bloch Hamiltonian $H$, hence the parameter monodromy of their quantum ground states (Fig. \ref{BlochMonodromySchematics}) is along loops in the space of Bloch Hamiltonians.
}

\adjustbox{
  rndfbox=4pt
}{
  \begin{tikzcd}[column sep=30pt]
    \Sigma^2
    \ar[
      rr, 
      bend left=30,
      phantom,
      "{}"{name=s, swap}
    ]
    \ar[
      rr, 
      bend right=30,
      phantom,
      shift left=3pt,
      "{}"{name=t}
    ]
    \ar[
      from=t, to=s,
      shorten=-4pt,
      Rightarrow,
      shift right=4pt,
      bend right=60,
      gray,
      scale=.5
    ]
    \ar[
      rr, 
      bend left=30,
      "{ H }",
    ]
    \ar[
      rr, 
      bend right=30,
      gray,
      shift left=3pt,
      crossing over
    ]
    \ar[
      from=s, to=t,
      shorten=-4pt,
      Rightarrow,
      shift right=4pt,
      bend right=60,
      crossing over
    ]
    &&
    \mathcal{A}
  \end{tikzcd}
}

\end{SCfigure}

For global analysis, the momentum domain $\Sigma^2$ must be the full Brillouin torus $\Sigma^2 = \widehat{T}^2$.
\begin{example}
\label[example]{HeisenbergGroupsAsPi1}
In the case of 2-band (fractional) Chern insulators \eqref{S2Classifying2BandChernInsulators} this yields \cite{KSS26-HigherAnyons}, as parameter monodromy in the topological phase with Chern class $C \in \mathbb{C}$, the \emph{integer Heisenberg group} at level 2 and Planck constant $\hbar = 2C$
\begin{equation}
  \pi_1\bracket({
    \mathrm{Maps}\bracket({
      \widehat{T}^2
      ,
      \mathbb{C}P^1
    }),
    C
  })
  \simeq
  \mathrm{Heis}_3\bracket({
    \mathbb{Z}, 2C
  })
  \mathrlap{\,.}
\end{equation}
Remarkably, this is exactly the group whose representations give fractional quantum Hall anyon states on the torus (\parencites[Thm. 1]{SS25-FQAH}[\S 3]{SS25-FQH}, review in \cite{SS25-ISQS29}).
\end{example}

\subsection{Local Bloch Hamiltonian Monodromy}
\label{OnLocalBlochHamiltonianMonodromy}

However, in the following we consider just the local situation, where (as suggested in \cite{BouhonEtAl2020,SS22-Ord,Bouhon2023}) $\Sigma^2$ is taken to be a \emph{punctured} submanifold of the Brillouin torus, where we think of the punctures as exhibiting (externally fixed) defect loci, such as a band nodes where the gapped band structure degenerates.

Specifically, we consider (in \cref{OnSimpleObjects}) $\Sigma^2$ to be the \emph{annulus}, hence the once-punctured version
\begin{equation}
  \label{TheOncePuncturedDisk}
  \Sigma^2
  :=
  D^2_{< 1} \setminus \{0\}
\end{equation}
of the open disk in the complex plane
$$
  D^2_{< r}
  :=
  \bracketmid\{{
    z \in \mathbb{C}
  }{
    \bracket\vert{z}\vert < r
  }\}
  \mathrlap{\,,}
$$
understood as modelling (only) the \emph{local} situation in the vicinity of a defect in momentum space. However, towards globalizing this picture, we crucially consider (in \cref{OnFusion}) also the situation where a pair of punctures approaches each other, reflected by the following cospan diagram (cf. \cref{TheCobordism}):
\begin{equation}
  \label{CospanOfPuncturedDisks}
  \begin{tikzcd}[
    column sep=-40pt
  ]
    &
    \bracket({
      D^2_{<1} 
        \setminus 
      \{-\sfrac{1}{3}, +\sfrac{1}{3}\}
    })
    \\
    \substack{
      \bracket({
        D^2_{< 1}
          \setminus \{-\sfrac{1}{3}\}
      })
      \\
      \mathllap{\sqcup\;}
      \bracket({
        D^2_{< 1}
          \setminus \{+\sfrac{1}{3}\}
      })
    }
    \ar[ur, hook]
    &&
    \substack{
      \bracket({
        D^2_{< 1}
          \setminus
        D^2_{\leq 2/3}
      })
      \\
      \mathllap{\sim \;}
      \bracket({
        D^2_{< 1}
        \setminus
        \{0\}
      })
      \,.
    }
    \ar[
      ul,
      hook'
    ]
  \end{tikzcd}
  \hspace{.8cm}
  \adjustbox{}{
\begin{tikzpicture}[
   baseline=
    {([yshift=-2pt]current bounding box.center)},
]
    \draw[thick, fill=lightgray] 
      (0,0) circle (1cm);

   \draw[
     fill=lightgray!30
   ]
    (0,0) circle (.7);

    \node at (1.05, .95) {$D^2_{<1}$};
    
    \draw[->, gray, thin] 
      (-1.3,0) -- (1.8,0) 
        node[
          pos=.9, 
          below,
          scale=.8
        ] 
          {$\mathrm{Re}(z)$};

    \draw[->, gray, thin] 
      (0,-1.3) -- (0,1.6) 
        node[
          pos=.9,
          left,
          scale=.8
        ] {$\mathrm{Im}(z)$};
    
    \draw[
      fill=white
    ]
      (-0.33,0) circle (2pt) 
        node[
          below, 
          yshift=-2pt,
          scale=.6
        ] 
          {$-\sfrac{1}{3}$};
    \draw[
      fill=white
    ]
      (0.33,0) circle (2pt) 
        node[
          below, 
          yshift=-2pt,
          scale=.6
        ] 
         {$+\sfrac{1}{3}$};

\end{tikzpicture}  
  }
\end{equation}

\section{
  \texorpdfstring{Local Parameter Monodromy via the Drinfeld Center $\mathcal{Z}(G)$}
  {Local Parameter Monodromy via the Drinfeld Center Z(G)}
}
\label{LocalParameterMonodromyViaCenter}

It is now a matter of homotopical topology (cf. \cref{OnHomotopy}) to work out the parameter monodromy from \cref{OnHomotopyOfSpacesOfBlochHamiltonians}, hence the potential topological order, in the local situation \eqref{TheOncePuncturedDisk}, namely to compute the groups
\begin{equation}
  \pi_1\bracket({
    \mathrm{Map}\bracket({
      \Sigma^2,
      \mathcal{A}
    }),
    p
  })
  \defneq
  \pi_1\bracket({
    \mathrm{Map}\bracket({
      D^2_{< 1} \setminus \{0\}
      ,
      \mathcal{A}
    }),
    p
  })
  \mathrlap{\,.}
\end{equation}
We will find (in \cref{OnSimpleObjects}) that these local parameter monodromies around a defect point correspond exactly to the simple objects in the Drinfeld center fusion category $\mathcal{Z}\bracket({\mathrm{Vec}_G})$, for $G := \pi_1(\mathcal{A})$, and (in \cref{OnFusion}) that, as a pair of defect approaches each other, the local monodromies fuse according to the fusion rules of this category.

\subsection{
\texorpdfstring
  {Simple objects of $\mathcal{Z}(G)$ as Local Parameter Monodromy}
  {Simple objects of Z(G) as Local Parameter Monodromy}
}
\label{OnSimpleObjects}

Given a choice of classifying space $\mathcal{A}$ (\cref{OnClassifyingSpacesOfBlochHamiltonians}), which we assume to be path-connected, $\pi_0(\mathcal{A}) \simeq \ast$, we will be concerned with its based and free \emph{loop spaces} \cref{BasedAndFreeLoopSpace}:
\begin{equation}
  \Omega
  \mathcal{A}
  :=
  \mathrm{Map}^\ast\bracket({
    S^1, 
    \mathcal{A}
  })
  \mathrlap{\,,}
  \quad 
  \mathcal{L}
  \mathcal{A}
  :=
  \mathrm{Map}\bracket({
    S^1, 
    \mathcal{A}
  })
  \mathrlap{\,.}
\end{equation}
The point is that, up to homotopy equivalence $\sim$ \cref{HomotopyEquivalenceRelation}, the free loop space is also the space of maps from the 1-punctured disk \eqref{TheOncePuncturedDisk}:
\begin{equation}
  \label{FreeLoopSpaceAsMapsFromAnnulus}
  \mathcal{L}\mathcal{A}
  \sim
  \mathrm{Map}\bracket({
    D^2\setminus \{0\},
    \mathcal{A}
  })
  \mathrlap{\,.}
\end{equation}
As such, by the discussion in \cref{OnHomotopyOfSpacesOfBlochHamiltonians}, we may think of:
\begin{enumerate}
\item 
$\mathcal{L}\mathcal{A}$ as the space of Bloch Hamiltonians on an open neighborhood of a puncture in the Brillouin torus;

\item
$[g] \in  \pi_0\bracket({\mathcal{L}\mathcal{A}})$ as a ``local topological phase'';
\footnote{
Here the adjective ``local'' refers to the fact \eqref{FreeLoopSpaceAsMapsFromAnnulus} that we are dealing with Bloch Hamiltonians only locally in the vicinity of a puncture.}

\item
$\pi_1\bracket({\mathcal{L}\mathcal{A}, [g]})$ as the ``local parameter monodromy'' in the local topological phase $[g]$.

\end{enumerate}

We want to describe this situation in terms of the \emph{fundamental group} of $\mathcal{A}$, which is:
\begin{equation}
  \label{FundamentalGroup}
  G
  :=
  \pi_1(\mathcal{A})
  \defneq
  \pi_0\bracket({\Omega\mathcal{A}})
  \,.
\end{equation}
First, note that: 
\begin{lemma}
\label[lemma]{ConjugacyClassesAsFundamentalGroupOfFreeLoopSpace}
The local topological phases form the set of conjugacy classes of elements of $G$: 
\begin{equation}
  \label{ConjugacyClassesAsComponentsOfFreeLoopSpace}
  \begin{aligned}
    \pi_0\bracket({\mathcal{L}\mathcal{A}})
    &
    \simeq
    \mathrm{Conj}(G)
    \\
    & :=
    \bracketmid\{{
      \bracketmid\{{
        g^k
        :=
        k g k^{-1}
      }{
        k \in G
      }\}
    }{
      g \in G
    }\}
    \mathrlap{\,.}
  \end{aligned}
\end{equation}
\end{lemma}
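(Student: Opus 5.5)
The plan is to compare free and based loops through the evaluation fibration and then identify the resulting orbit set. First, fix a basepoint $a_0 \in \mathcal{A}$. Since $\mathcal{A}$ is a CW-complex, evaluation at the basepoint of $S^1$,
\begin{equation}
  \mathrm{ev}_0 \,:\, \mathcal{L}\mathcal{A} \longrightarrow \mathcal{A},
\end{equation}
is a Hurewicz fibration, because $\{0\} \hookrightarrow S^1$ is a cofibration. Its fiber over $a_0$ is the based loop space $\Omega\mathcal{A}$. The long exact sequence of homotopy for this fibration ends in
\begin{equation}
  \pi_1(\mathcal{A}, a_0) \longrightarrow \pi_0(\Omega\mathcal{A}) \longrightarrow \pi_0(\mathcal{L}\mathcal{A}) \longrightarrow \pi_0(\mathcal{A}) \simeq \ast .
\end{equation}
This sequence is exact as pointed sets. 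More usefully, its refinement says that $\pi_0$ of the total space is the orbit set of the $\pi_1$-action of the base on $\pi_0$ of the fiber.

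The key steps, in order, are these. (1) Because $\pi_0(\mathcal{A}) = \ast$, every free loop can be moved along a path so that its starting point is $a_0$, using the homotopy lifting property of $\mathrm{ev}_0$. Hence $\pi_0(\Omega\mathcal{A}) = G \to \pi_0(\mathcal{L}\mathcal{A})$ is surjective. (2) Two based loops $\gamma, \gamma'$ become freely homotopic if and only if they lie in the same orbit of the monodromy action of $\pi_1(\mathcal{A}, a_0)$ on the fiber. This is the standard statement for the $\pi_0$-end of the long exact sequence of a fibration. (3) Identify that monodromy action explicitly as conjugation: transporting a based loop $\gamma$ around a loop $k$ in the base yields, up to based homotopy, the loop $k \cdot \gamma \cdot k^{-1}$. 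One can see this by lifting the path $k$ to $\mathcal{L}\mathcal{A}$, for instance by the explicit lift $t \mapsto k|_{[0,t]}^{-1} \cdot \gamma \cdot k|_{[0,t]}$, suitably reparametrized. Any lift gives the same result up to homotopy in the fiber. Together, these give the bijection $\pi_0(\mathcal{L}\mathcal{A}) \simeq G/\mathrm{conj} = \mathrm{Conj}(G)$.

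Alternatively, and perhaps more transparently, I would prove it directly. Any free homotopy $H \colon S^1 \times [0,1] \to \mathcal{A}$ between based loops $\gamma, \gamma'$ restricts along $\{0\} \times [0,1]$ to a loop $k$ at $a_0$. Reading the boundary of the square $[0,1]^2$ gives $\gamma \cdot k \simeq k \cdot \gamma'$ relative to the endpoints, that is, $\gamma' = k^{-1}\gamma k$ in $G$. Conversely, given $k$, the square-filling construction above produces a free homotopy from $\gamma$ to its conjugate. Surjectivity follows by choosing a path from $a_0$ to the starting point of a given free loop.

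The main obstacle is step (3): verifying carefully that the explicit lift is continuous and well-defined up to homotopy, so that the action is exactly conjugation, with the right convention for the sign $k g k^{-1}$ versus $k^{-1} g k$. Since both conventions give the same orbits, this sign ambiguity does not affect the statement. The square-boundary argument sidesteps most of the fibration technology, so I would present that as the main proof and mention the fibration viewpoint only as context.
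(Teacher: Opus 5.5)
Your proposal is correct and follows essentially the paper's own argument. A free homotopy in $\mathcal{L}\mathcal{A}$ traces a path $k$ of base points, and reading off the boundary of the square shows the end loop is based-homotopic to the conjugate of the start loop by $k$; connectedness of $\mathcal{A}$ gives surjectivity from based loops, and the explicit lift $t \mapsto k|_{[0,t]}^{-1}\cdot\gamma\cdot k|_{[0,t]}$ (the paper's $\ell(t-)\star\gamma\star\overline{\ell}(t-)$, used in \cref{pi1LA}) gives the converse. One minor correction: $\mathrm{ev}_0$ is a fibration because $\{0\}\hookrightarrow S^1$ is a cofibration, not because $\mathcal{A}$ is a CW-complex.
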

\begin{proof}
This is standard, but for the following discussion it is worth making the proof explicit:
A path $\inlinetikzcd{\widehat{\gamma} : [0,1] \ar[r] \& \mathcal{L}\mathcal{A} }$ in the free loop space moves the loop's base-point along an underlying path in $\mathcal{A}$, $\inlinetikzcd{ \gamma : [0,1] \ar[r] \& \mathcal{A} }$, whence $\widehat{\gamma}(1)$ is based homotopic to the result of conjugate-concatenating $\gamma$ with $\widehat{\gamma}(0)$, like this (cf. \cref{PathOfLoops}):
\begin{equation}
\label{EndpointLoopConjugateToStartingLoop}
\widehat{\gamma}(1) 
    \sim_{\mathrm{based}}
  \gamma 
    \star 
  \widehat{\gamma}(0) 
    \star 
  \overline{\gamma}
  \mathrlap{\,.}
\end{equation}
This way, first all elements of $\mathcal{L}\mathcal{A}$ are seen to be path-connected to based loops (recalling that we assume $\mathcal{A}$ to be connected), and then among these based loops precisely those are path-connected  which correspond to conjugate elements in $\pi_1(\mathcal{A})$.
\end{proof}

\begin{SCfigure}[.6][htb]
\caption{\label{PathOfLoops}
 A path $\widehat{\gamma}$ in a free loop space $\mathcal{L}\mathcal{A}$ and the path $\gamma$ traced out by the basepoint of the loops in $\mathcal{A}$. The loop $\widehat\gamma(1)$ is based-homotopic to the conjugate concatenation of $\widehat\gamma(0)$ by $\gamma$, cf. \cref{EndpointLoopConjugateToStartingLoop}.
}

\begin{tikzpicture}

  \node at (0,0) {
  \includegraphics
    [width=7cm]
    {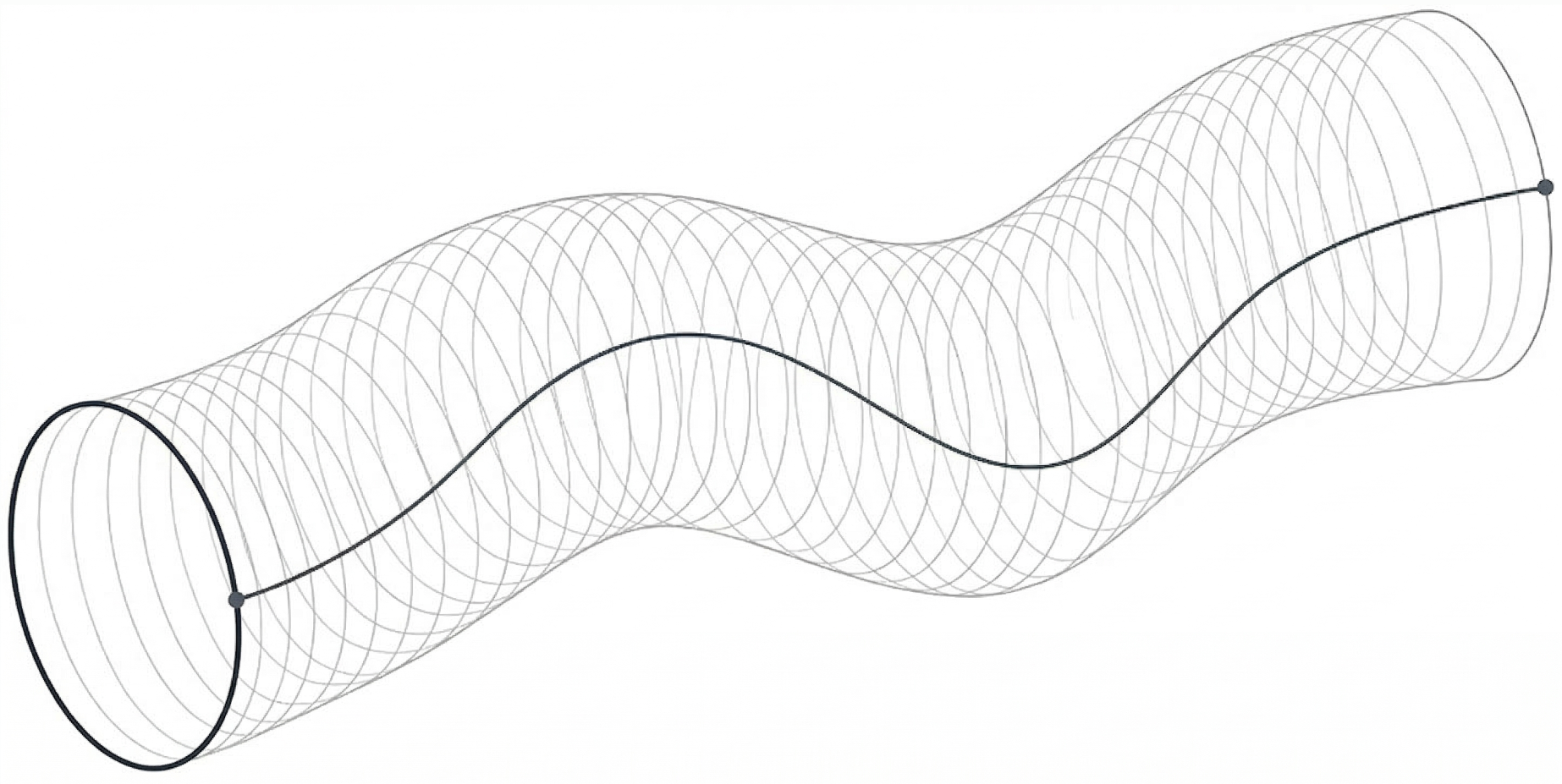}};

  \draw[
    rotate=9.8,
    shift={(3.095,.35)},
    line width=.75
  ]
    (-85:.452 and .84) arc 
    (-85:82:.452 and .84);

  \node[
    scale=.75
  ] at (-3.18,.15) {$\widehat\gamma(0)$};
  \node[
    scale=.75
  ] at (-.45,1.05) {$\widehat\gamma$};
  \node[
    scale=.75
  ] at (3.6,1.4) {$\widehat\gamma(1)$};

  \node[
    scale=.75, 
    fill=white,
    fill opacity=.7,
    text opacity=1
  ] at 
    (-.4,.05) {$\gamma$};
  \node[
    scale=.75, 
    fill=white,
    fill opacity=.7,
    text opacity=1
  ] at 
    (-2.4,-1.2) {$\gamma(0)$};
  \node[
    scale=.75, 
    fill=white,
    fill opacity=.7,
    text opacity=1
  ] at 
    (3.5,.65) {$\gamma(1)$};

  \draw[fill=black]
    (-2.44,-.94) circle (.05);

  \draw[fill=black]
    (3.425,.916) circle (.05);

\end{tikzpicture}
\end{SCfigure}

It remains to determine the parameter monodromy groups $\pi_1\bracket({\mathcal{L}\mathcal{A}})$ in each of these phases:

\begin{proposition}
  \label[proposition]{pi1LA}
  In every topological phase $[g] \in \mathrm{Conj}(G) \simeq \pi_0(\mathcal{L}\mathcal{A})$ \eqref{ConjugacyClassesAsComponentsOfFreeLoopSpace} we have:
  \begin{enumerate}
  \item 
  Generally, the image of the parameter monodromy $\pi_1\bracket({\mathcal{L}\mathcal{A}})$ in $G$ maps isomorphically onto the centralizer group \cref{CentralizersAsIsotropyGroups}
  of $g$:
  \begin{equation}
    \label{ImageOfTopologicalMonodromyInG}
    \pi_1(\mathrm{ev})
    \bracket({
      \pi_1\bracket({
        \mathcal{L}\mathcal{A}
      })
    })
    \simeq
    Z_G(g)
    \subset 
    G
    \mathrlap{\,.}
  \end{equation}
  \item 
  But when $\mathcal{A}$ has trivial $\pi_2$, then the parameter monodromy already coincides with the centralizer:
  \begin{equation}
    \label{ParameterMonodromyCoincidingWithCentralizer}
    \mathllap{
    \pi_2(\mathcal{A}) \simeq \ast
    \;\;\;\;\;
    \Rightarrow
    \;\;\;\;\;
    }
    \pi_1\bracket({
      \mathcal{L}\mathcal{A}
    })
    \simeq
    Z_G(g)
    \mathrlap{\,.}
  \end{equation}
  \end{enumerate} 
\end{proposition}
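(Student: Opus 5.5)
The plan is to use the evaluation fibration of the free loop space,
\[
  \Omega\mathcal{A} \longrightarrow \mathcal{L}\mathcal{A} \xrightarrow{\;\mathrm{ev}\;} \mathcal{A},
  \qquad
  \mathrm{ev}(\ell) := \ell(0),
\]
which is a Hurewicz fibration because $\mathcal{A}$ is a CW-complex, so that evaluation at the basepoint of $S^1$ is the restriction along a cofibration $\ast \hookrightarrow S^1$. Fix a based loop $\ell$ representing $g \in G = \pi_0(\Omega\mathcal{A})$, and take it as basepoint of the component $[g]$ of $\mathcal{L}\mathcal{A}$. The long exact homotopy sequence then reads
\[
  \pi_2(\mathcal{A})
  \xrightarrow{\;\partial\;}
  \pi_1(\Omega\mathcal{A},\ell)
  \longrightarrow
  \pi_1(\mathcal{L}\mathcal{A},\ell)
  \xrightarrow{\;\pi_1(\mathrm{ev})\;}
  \pi_1(\mathcal{A})
  \xrightarrow{\;\partial\;}
  \pi_0(\Omega\mathcal{A})
  \longrightarrow
  \pi_0(\mathcal{L}\mathcal{A}) .
\]

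The key step is to identify the connecting map $\partial : \pi_1(\mathcal{A}) \to \pi_0(\Omega\mathcal{A}) = G$ at basepoint $\ell$. I would do this exactly with the explicit argument from the proof of \cref{ConjugacyClassesAsFundamentalGroupOfFreeLoopSpace}. A loop $\gamma$ in $\mathcal{A}$ lifts to a path $\widehat\gamma$ in $\mathcal{L}\mathcal{A}$ starting at $\ell$, for instance by the explicit lift that conjugates $\ell$ by initial segments of $\gamma$. Its endpoint is then, by \eqref{EndpointLoopConjugateToStartingLoop}, based-homotopic to $\gamma\star\ell\star\overline{\gamma}$. Hence $\partial[\gamma] = [\gamma]\, g\, [\gamma]^{-1}$, or its inverse-convention variant; this describes the action of $G$ on $\pi_0(\Omega\mathcal{A})$ by conjugation. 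By exactness, the image of $\pi_1(\mathrm{ev})$ is the stabilizer of $g$ under this action, which is the centralizer $Z_G(g)$. This gives \eqref{ImageOfTopologicalMonodromyInG}. I expect the main obstacle to be precisely this step: one has to check carefully that exactness at $\pi_1(\mathcal{A})$ in the non-group tail of the sequence really means ``stabilizer of the basepoint component''. To make this rigorous, I would argue directly: a loop in $\mathcal{A}$ lies in the image of $\pi_1(\mathrm{ev})$ iff its lift can be closed up in the fiber, i.e.\ iff $\gamma\star\ell\star\overline\gamma \sim_{\mathrm{based}} \ell$.

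For (ii), observe that the fiber component is $\pi_1(\Omega\mathcal{A},\ell) \simeq \pi_1(\Omega\mathcal{A},\mathrm{const}) \simeq \pi_2(\mathcal{A})$. All components of the H-group $\Omega\mathcal{A}$ are homotopy equivalent, via concatenation with $\overline{\ell}$. So if $\pi_2(\mathcal{A}) = 0$, the kernel of $\pi_1(\mathrm{ev})$, which is the image of $\pi_1(\Omega\mathcal{A},\ell)$, is trivial. Therefore $\pi_1(\mathrm{ev})$ is injective, and combined with (i) it gives an isomorphism $\pi_1(\mathcal{L}\mathcal{A},\ell) \simeq Z_G(g)$, which is \eqref{ParameterMonodromyCoincidingWithCentralizer}. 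Independence of the choice of representative $g$ within its conjugacy class follows since centralizers of conjugate elements are conjugate, hence isomorphic.
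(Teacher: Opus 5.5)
Your proposal is correct and follows the paper's proof essentially step for step. Both arguments use the evaluation fibration $\Omega\mathcal{A}\to\mathcal{L}\mathcal{A}\to\mathcal{A}$ based at a representative loop of $g$, identify $\partial_0$ as conjugation of $g$ via the explicit lift that conjugates the loop by initial segments of $\gamma$, obtain $Z_G(g)$ from exactness (with kernels read as preimages of the basepoint), and get injectivity from the vanishing of $\pi_1(\Omega\mathcal{A})\cong\pi_2(\mathcal{A})$. Your extra remarks (that this isomorphism holds on every component because $\Omega\mathcal{A}$ is an H-group, and that pointed-set exactness means ``stabilizer of $g$'') only make explicit what the paper uses tacitly.
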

\begin{proof}
Consider any $\gamma \in \Omega \mathcal{A}$ which represents the local topological phase $[g]$: 
$$
  \begin{tikzcd}[row sep=-2pt, column sep=0pt]
    G
    \ar[rr, "{ \sim }"]
    &&
    \pi_0\bracket({\Omega\mathcal{A}})
    \ar[rr]
    &&
    \pi_0\bracket({
      \mathcal{L}\mathcal{A}
    })
    \ar[rr, "{ \sim }"]
    &&
    \mathrm{Conj}(G)
    \\
    g
    &\leftrightarrow&
    {[\gamma]}
    &\mapsto&
    {[\gamma]} 
      &\leftrightarrow& 
    {[g]}
    \mathrlap{\,.}
  \end{tikzcd}
$$
With that used as the base point, consider the homotopy long exact sequence (\cref{HomotopyLES})
induced by the map $\mathrm{ev}$ \cref{EvaluationSerreFibration} that evaluates a loop at its base point:
\begin{equation}
  \label{EvaluationFiberSequence}
  \begin{tikzcd}
    \Omega\mathcal{A}
    \ar[
      r, 
      "{\iota}",
      hook
    ]
    &
    \mathcal{L}\mathcal{A}
    \ar[
      r, 
      ->>,
      "{ \mathrm{ev} }"
    ]
    &
    \mathcal{A}
    \mathrlap{\,,}
  \end{tikzcd}
\end{equation}
which is of this form:
\begin{equation}
  \label{HomotopyExactSequence}
  \begin{tikzcd}
    \pi_1\bracket({
      \Omega\mathcal{A}
      ,
      \gamma
    })
    \ar[
      r
    ]
    &
    \pi_1\bracket({
      \mathcal{L}\mathcal{A}
      ,
      \gamma
    })
    \ar[
      r,
      "{ \pi_1(\mathrm{ev}) }"
    ]
    &
    \pi_1\bracket({
      \mathcal{A}
    })
    \ar[
      dll,
      snake left,
      "{
        \partial_0
      }"{description}
    ]
    \\
    \pi_0\bracket({\Omega \mathcal{A}})
    \ar[
      r
    ]
    &
    \pi_0\bracket({\mathcal{L}\mathcal{A}})
  \end{tikzcd}
  \;\;=\;\;
  \begin{tikzcd}
    \pi_1\bracket({
      \Omega\mathcal{A}
      ,
      \gamma
    })
    \ar[
      r
    ]
    &
    \pi_1\bracket({
      \mathcal{L}\mathcal{A}
      ,
      \gamma
    })
    \ar[
      r,
      "{ \pi_1(\mathrm{ev}) }"
    ]
    &
    G
    \ar[
      dll,
      snake left,
      "{
        \mathrm{Ad}_{(-)}(g)
      }"{description}
    ]
    \\
    G
    \ar[r, ->>]
    &
    \mathrm{Conj}(G)
    \mathrlap{\,.}
  \end{tikzcd}
\end{equation}
(Here the symbol ``$G$'' in the bottom right denotes the underlying set of the group, pointed by the element $g$, cf. \cref{TheHomotopyLESInLowDegree}.)

On the right of \eqref{HomotopyExactSequence} we are claiming that the connecting homomorphism $\partial_0$ \cref{ConnectingHomomorphism} acts by conjugation on $g \in G$. To see this, recall \cref{ConnectingHomomorphism} that $\partial_0$ is generally given on the class of a based loop $\ell \in P_0 \mathcal{A}$ by first lifting it through $\mathrm{ev}$ to a based path $\widehat \ell \in P_{[g]} \mathcal{L}\mathcal{A}$ and then evaluating that at its endpoint:
$
  \partial_0\bracket({[\ell]})
  =
  \bracket[{\widehat{\ell}(1)}]
$.
Here we may take $\widehat{\ell}$ to be given (cf. again \cref{PathOfLoops}) by
\begin{equation}
  \widehat{\ell}(t)(-)
  :=
    \ell(t-)
    \star
    \gamma
    \star
    \overline{\ell}(t-),
\end{equation}
which implies the intermediate claim \eqref{HomotopyExactSequence}. 

From this, the first claim \eqref{ImageOfTopologicalMonodromyInG} follows by exactness: The image of $\pi_1(\mathrm{ev})$ is now identified with the kernel of $\mathrm{Ad}_{(-)}(g)$, and that is $Z_G(g)$, by definition. 

Similarly for the second claim \eqref{ParameterMonodromyCoincidingWithCentralizer}: If $\pi_2(\mathcal{A}) = \pi_1(\Omega \mathcal{A})$ is trivial, then exactness gives that $\pi_1(\mathrm{ev})$ is injective and hence an isomorphism onto its image.
\end{proof}

In words, we may summarize \cref{ConjugacyClassesAsFundamentalGroupOfFreeLoopSpace,pi1LA} as follows, recalling that quantum states falling into irreducible representations constitute the \emph{superselection sectors} of a quantum system (cf. \parencites[Def. 2.1]{FrohlichEtAl1990}[p. 273]{Baker2013}), and assuming now that $\pi_2(\mathcal{A})$ is trivial (as in \cref{PTSymmetric3BandSystems}):

\begin{standout}
  The topological phases in the vicinity of a defect are labeled by conjugacy classes $[g] \in \mathrm{Conj}(G)$, and the superselection sectors of the topological orders in such a phase are labeled by irreps $\rho \in \mathrm{Irr}\bracket({Z_G(g)})$ of the centralizer.
\end{standout}

Strikingly, this exactly coincides with the characterization (cf. \cref{DrinfeldCenterAsInertiaGroupoidRepresentations}) of the simple objects in the Drinfeld center fusion category $\mathcal{Z}\bracket({\mathrm{Vec}_G})$. Hence we have equivalently:
\begin{standout}
  The superselection sectors of topological orders near defect points are labeled by the simple objects of the fusion category $\mathcal{Z}\bracket({\mathrm{Vec}_G})$.
\end{standout}

Our next goal is to show that also the fusion rules match.

\subsection{
\texorpdfstring
  {Fusion in $\mathcal{Z}(G)$ as Fusion of Local Parameter Monodromy}
  {Fusion in Z(G) as Fusion of Local Parameter Monodromy}
}
\label{OnFusion}

To describe the situation of a pair of nearby defect points, consider the space of maps from the 2-punctured disk \cref{CospanOfPuncturedDisks}, to be denoted:
\begin{equation}
  \label{MapsOutOf2PuncturedDisk}
  \mathcal{L}^{(2)}\mathcal{\mathcal{A}}
  :=
  \mathrm{Map}\bracket({
    D^2
    \setminus
    \{
      +\sfrac{1}{3}
      ,
      -\sfrac{1}{3}
    \}
    ,
    \mathcal{A}
  })
  \mathrlap{\,.}
\end{equation}

First, we consider the homotopy of this space. In an immediate variation of \cref{ConjugacyClassesAsFundamentalGroupOfFreeLoopSpace}, we have:
\begin{lemma}
The connected components of $\mathcal{L}^{(2)}\mathcal{A}$ \cref{MapsOutOf2PuncturedDisk} correspond to joint conjugacy classes of pairs of group elements:
\begin{equation}
  \label{DoubleConjugacyClassesAsFundamentalGroupOfDoubleFreeLoopSpace}
  \begin{aligned}
  \pi_0\bracket({
    \mathcal{L}^{(2)}\mathcal{\mathcal{A}}
  })
  & \simeq
  \mathrm{Conj}^{(2)}\bracket({G})
  \\
  & :=
  \bracketmid\{{
    \bracketmid\{{
      \bracket({
        g_1^k, g_2^k
      })
      :=
      \bracket({
        k g_1 k^{-1}, k g_2 k^{-1}
      })
    }{
      k \in G
    }\}
  }{
    g_1, g_2 \in G
  }\}
  \mathrlap{\,.}
  \end{aligned}
\end{equation}
\end{lemma}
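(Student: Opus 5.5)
The plan is to replace the two-punctured disk by a homotopy equivalent CW-complex, namely the wedge of two circles, and then redo the proof of \cref{ConjugacyClassesAsFundamentalGroupOfFreeLoopSpace} for that complex. Concretely, $D^2 \setminus \{\pm\sfrac{1}{3}\}$ deformation retracts onto a figure-eight $S^1 \vee S^1$. The wedge point sits at $0$, and the two circles are loops around $-\sfrac{1}{3}$ and $+\sfrac{1}{3}$ respectively. Since $\mathrm{Map}(-,\mathcal{A})$ carries homotopy equivalences to homotopy equivalences, we get
$$
  \mathcal{L}^{(2)}\mathcal{A}
  \sim
  \mathrm{Map}\bracket({S^1 \vee S^1, \mathcal{A}})
  \mathrlap{\,,}
$$
exactly as in \eqref{FreeLoopSpaceAsMapsFromAnnulus}. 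A point of this space is a pair of loops in $\mathcal{A}$ sharing a common, but freely movable, base point.

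Next, I would use the evaluation at the wedge point, $\mathrm{ev}:\mathrm{Map}(S^1\vee S^1,\mathcal{A}) \to \mathcal{A}$. This is a Serre fibration by the same argument as for \eqref{EvaluationSerreFibration}, because $\{\ast\} \hookrightarrow S^1 \vee S^1$ is a cofibration. Its fiber is the based mapping space $\Omega\mathcal{A}\times\Omega\mathcal{A}$. Since $\mathcal{A}$ is connected, every free pair of loops can be transported along a path to the fixed base point. So every component of $\mathcal{L}^{(2)}\mathcal{A}$ meets the fiber, and
$$
  \pi_0\bracket({\Omega\mathcal{A}\times\Omega\mathcal{A}}) \simeq G\times G
  \longrightarrow
  \pi_0\bracket({\mathcal{L}^{(2)}\mathcal{A}})
$$
is surjective.

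It remains to identify the fibers of this surjection. The tail of the homotopy long exact sequence (\cref{HomotopyLES}, in the low-degree form of \cref{TheHomotopyLESInLowDegree}) says that two points of $\pi_0$ of the fiber become equal in the total space exactly when they lie in the same orbit of the action of $\pi_1(\mathcal{A}) = G$ through the connecting map $\partial_0$. So the key step is to compute this action, just as in the proof of \cref{pi1LA}. Given a based loop $\ell$ and a pair $(\gamma_1,\gamma_2)$ representing $(g_1,g_2)$, I lift $\ell$ to the path of pairs
$$
  t \mapsto
  \bracket({
    \ell(t-)\star\gamma_1\star\overline{\ell}(t-),\;
    \ell(t-)\star\gamma_2\star\overline{\ell}(t-)
  })
  \mathrlap{\,.}
$$
Both components are conjugated by the same path, because they share the moving base point. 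The endpoint of this lift represents $(k g_1 k^{-1}, k g_2 k^{-1})$, where $k = [\ell]$. Hence the orbits are precisely the joint conjugacy classes, which gives \eqref{DoubleConjugacyClassesAsFundamentalGroupOfDoubleFreeLoopSpace}.

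The main obstacle I expect is making this orbit description rigorous. One has to justify that, at the level of $\pi_0$, the exact sequence really says that points of $\pi_0$ of the fiber are identified exactly along the $\pi_1$-action orbits, and not merely that the map is surjective. One also has to make sure the connecting map acts diagonally rather than independently on the two factors. The diagonal action is exactly what distinguishes the answer from $\mathrm{Conj}(G)\times\mathrm{Conj}(G)$. I would handle both points directly, mirroring \eqref{EndpointLoopConjugateToStartingLoop}. Any path of maps $S^1\vee S^1\to\mathcal{A}$ drags the wedge point along a single path $\gamma$ in $\mathcal{A}$. Its endpoint is therefore based-homotopic to the simultaneous $\gamma$-conjugate of both loops of the starting pair. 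This shows that joint conjugacy is necessary for two pairs to lie in the same component, and the explicit lift above shows that it is sufficient.
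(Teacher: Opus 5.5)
Your proposal is correct and is essentially the paper's argument. The paper states this lemma as an immediate variation of \cref{ConjugacyClassesAsFundamentalGroupOfFreeLoopSpace}: transporting the common base point shows that every component meets the based pairs, and a path of pairs drags both loops along one and the same path $\gamma$, so its endpoint is the simultaneous $\gamma$-conjugate. Your figure-eight model, the fibration $(\Omega\mathcal{A})^2 \hookrightarrow \mathcal{L}^{(2)}\mathcal{A} \to \mathcal{A}$, the diagonal lift from the proof of \cref{pi1L2A}, and your observation that pointed-set exactness at $\pi_0$ alone does not give the orbit description (so the direct path argument closes the proof) just make explicit what the paper leaves implicit.
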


Similarly, the immediate but crucial variation of \cref{pi1LA} is:
\begin{proposition}
  \label[proposition]{pi1L2A}
  In every phase $[g_1,g_2] \in \mathrm{Conj}^{(2)}(G) \simeq \pi_0\bracket({\mathcal{L}^{(2)}\mathcal{A}})$ \cref{DoubleConjugacyClassesAsFundamentalGroupOfDoubleFreeLoopSpace} we have: 
  \begin{enumerate}
  \item
  Generally, the image of the parameter monodromy $\pi_1\bracket({\mathcal{L}\mathcal{A}})$ in $G$ is isomorphic into the joint centralizer of $(g_1,g_2)$:
  \begin{equation}
    \pi_1(\mathrm{ev})
    \bracket({
      \pi_1\bracket({
        \mathcal{L}^{(2)}
        \mathcal{A}
      })
    })
    \simeq
    Z_G(g_1)
    \cap
    Z_G(g_2)
    \subset G
    \mathrlap{\,.}
  \end{equation}

  \item  
  But when $\mathcal{A}$ has trivial $\pi_2$, then the parameter monodromy already coincides with the joint centralizer:
  \begin{equation}
    \mathllap{
      \pi_2(\mathcal{A})
      \simeq
      \ast
      \;\;\;\;
      \Rightarrow
      \;\;\;\;
    }
    \pi_1\bracket({
      \mathcal{L}^{(2)}
      \mathcal{A}
    })
    \simeq
    Z_G(g_1)
    \cap
    Z_G(g_2)
    \mathrlap{\,.}
  \end{equation}
  \end{enumerate} 
\end{proposition}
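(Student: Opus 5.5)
The plan is to mimic the proof of \cref{pi1LA}, replacing the free loop space by the space of maps out of the 2-punctured disk. First, $D^2\setminus\{\pm\sfrac{1}{3}\}$ deformation retracts onto a wedge of two circles $S^1\vee S^1$, with wedge point $0$ say. Hence, up to homotopy equivalence,
$$
  \mathcal{L}^{(2)}\mathcal{A}
  \sim
  \mathrm{Map}\bracket({S^1 \vee S^1, \mathcal{A}})
  \mathrlap{\,.}
$$
Evaluation at the wedge point then gives a Serre fibration (restriction along the cofibration $\{0\}\hookrightarrow S^1\vee S^1$):
$$
  \begin{tikzcd}
    \Omega\mathcal{A}\times\Omega\mathcal{A}
    \ar[r, hook]
    &
    \mathcal{L}^{(2)}\mathcal{A}
    \ar[r, ->>, "{\mathrm{ev}}"]
    &
    \mathcal{A}
    \mathrlap{\,.}
  \end{tikzcd}
$$
Its fiber is the based mapping space $\mathrm{Map}^\ast(S^1\vee S^1,\mathcal{A})\simeq \Omega\mathcal{A}\times\Omega\mathcal{A}$. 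Consequently $\pi_0$ of the fiber is $G\times G$ and $\pi_1$ of the fiber is $\pi_2(\mathcal{A})\times\pi_2(\mathcal{A})$.

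As base point we choose a pair of based loops $(\gamma_1,\gamma_2)$ representing $(g_1,g_2)$. The long exact sequence in low degrees then reads
$$
  \pi_2(\mathcal{A})^{2}
  \longrightarrow
  \pi_1\bracket({\mathcal{L}^{(2)}\mathcal{A}})
  \xrightarrow{\;\pi_1(\mathrm{ev})\;}
  G
  \xrightarrow{\;\partial_0\;}
  G\times G
  \longrightarrow
  \pi_0\bracket({\mathcal{L}^{(2)}\mathcal{A}})
  \mathrlap{\,.}
$$
The key step is to identify the connecting map $\partial_0$. We lift a loop $\ell$ at the base point to the path of pairs of loops
$$
  \widehat\ell(t)
  :=
  \bracket({
    \ell(t-)\star\gamma_1\star\overline{\ell}(t-)
    ,\,
    \ell(t-)\star\gamma_2\star\overline{\ell}(t-)
  })
  \mathrlap{\,.}
$$
This is one and the same conjugating path applied to both loops, which is exactly what keeps the map from the wedge continuous: both circles must stay joined at the common moving base point. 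It follows that $\partial_0[\ell]=(\ell g_1\ell^{-1},\ell g_2\ell^{-1})$. This simultaneous conjugation is also what yields the preceding lemma on $\mathrm{Conj}^{(2)}(G)$.

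By exactness at $G$, the image of $\pi_1(\mathrm{ev})$ equals the stabilizer of $(g_1,g_2)$ under simultaneous conjugation, namely $Z_G(g_1)\cap Z_G(g_2)$, which proves the first claim. For the second, if $\pi_2(\mathcal{A})=0$, then $\pi_1$ of the fiber vanishes, so $\pi_1(\mathrm{ev})$ is injective and hence an isomorphism onto that image.

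The main obstacle I expect is not the algebra but two points of care. The first is justifying the fibration and the homotopy equivalence $\mathcal{L}^{(2)}\mathcal{A}\sim\mathrm{Map}(S^1\vee S^1,\mathcal{A})$ for open non-compact domains; I would use the standard facts that homotopy equivalences of domains induce homotopy equivalences of mapping spaces into CW targets, and that restriction along a cofibration is a fibration. The second is that exactness at the pointed set $G$ must be read as "the preimage of the base point $(g_1,g_2)$ is the image", using the action of $\pi_1(\mathcal{A})$ on $\pi_0$ of the fiber, as in \cref{TheHomotopyLESInLowDegree}.
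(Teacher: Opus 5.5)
Your proposal is correct and follows essentially the same route as the paper. Both use the base-point evaluation fibration with fiber $(\Omega\mathcal{A})^2$, the same simultaneous-conjugation lift $\widehat{\ell}_t$ identifying $\partial_0$ with $\ell \mapsto (\ell g_1 \ell^{-1}, \ell g_2 \ell^{-1})$, and exactness at $G$ and at $\pi_1(\mathcal{L}^{(2)}\mathcal{A})$; your explicit reduction to $\mathrm{Map}(S^1\vee S^1,\mathcal{A})$ via the figure-eight deformation retract just makes precise the identification of the fiber, and of its $\pi_1$ with $\pi_2(\mathcal{A})^2$, which the paper leaves implicit.
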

\begin{proof}
Consider a pair of loops $(\gamma_1, \gamma_2)$ which represent $[g_1,g_2] \in G^2/_{\!\sim}$:
$$
  \begin{tikzcd}[row sep=-2pt, 
    column sep=0pt
  ]
    G^2
    \ar[
      rr, 
      "{ \sim }"
    ]
    &&
    \pi_0\bracket({
      (\Omega\mathcal{A})^2
    })
    \ar[
      rr,
      "{  }"
    ]
    &&
    \pi_0\bracket({
      \mathcal{L}^{(2)}
      \mathcal{A}
    })
    \ar[
      rr,
      "{ \sim }"
    ]
    &&
    G^2/_{\!\sim}
    \\
    (g_1,g_2)
    &\leftrightarrow&
    {[\gamma_1,\gamma_2]}
    &\mapsto&
    {[\gamma_1,\gamma_2]}
    &\leftrightarrow&
    {[g_1,g_2]}\,.
  \end{tikzcd}
$$
With that used as the base point, consider the homotopy long exact sequence (cf. \cref{HomotopyLES}) induced by the evaluation map \cref{EvaluationSerreFibration}
\begin{equation}
  \begin{tikzcd}
    (\Omega\mathcal{A})^2
    \ar[
      r,
      hook,
      "{ \iota }"
    ]
    &
    \mathcal{L}^{(2)}
    \mathcal{A}
    \ar[
      r,
      ->>,
      "{ \mathrm{ev} }"
    ]
    &
    \mathcal{A}
    \mathrlap{\,,}
  \end{tikzcd}
\end{equation}
of the form:
\begin{equation}
  \begin{tikzcd}
    &[-25pt]
    \pi_1\bracket({
      \mathcal{L}^{(2)}
      \mathcal{A},
      (\gamma_1,\gamma_2)
    })
    \ar[
      r,
      "{
        \pi_1(\mathrm{ev})
      }"
    ]
    &
    \pi_1(\mathcal{A})
    \ar[
      dll,
      snake left,
      "{ 
        \partial 
      }"{description}
    ]
    \\
    \pi_0\bracket({
      (\Omega\mathcal{A})^2
    })
    \ar[
      r,
      ->>
    ]
    &
    \pi_0\bracket({
      \mathcal{L}^{(2)}
      \mathcal{A}
    })    
  \end{tikzcd}
  \;\;
  =
  \;\;
  \begin{tikzcd}
    &[-35pt]
    \pi_1\bracket({
      \mathcal{L}^{(2)}
      \mathcal{A},
      (\gamma_1,\gamma_2)
    })
    \ar[
      r,
      "{
        \pi_1(\mathrm{ev})
      }"
    ]
    &
    G
    \ar[
      dll,
      snake left,
      "{ 
        \mathrm{Ad}_{(-)}(
          g_1,g_2
        )
      }"{description}
    ]
    \\
    G^2
    \ar[
      r,
      ->>
    ]
    &
    G^2/_{\!\sim}
    \mathrlap{\,.}
  \end{tikzcd}
\end{equation}
Here ``$G^2$'' on the right denotes the underlying set of pairs of groups elements, cf. \cref{TheHomotopyLESInLowDegree}, pointed by $(g_1,g_2)$, and $\mathrm{Ad}_{k}(g_1,g_2) := \bracket({ k g_1 k^{-1}, k g_2 k^{-1} })$.

This identification follows by the same logic as in the proof of Prop. \ref{pi1LA}, the lifted path now being
$$
  \widehat{\ell}_t
  :=
  \bracket({
    \ell(t-)
    \star
    \gamma_1
    \star
    \overline{\ell}(t-)
    ,\,
    \ell(t-)
    \star
    \gamma_2
    \star
    \overline{\ell}(t-)
  })
  \mathrlap{\,.}
$$

The claim follows by exactness, as before.
\end{proof}

Now consider a topological phase $[g_1,g_2] \in \mathrm{Conj}^{(2)}(G) \simeq \pi_0\bracket({\mathcal{L}^{(2)}\mathcal{A}})$ \cref{DoubleConjugacyClassesAsFundamentalGroupOfDoubleFreeLoopSpace} locally around the pair of defects. Then, for every choice of representatives
$$
  (\gamma_1, \gamma_2)
  \in \Omega\mathcal{A}\,,
$$
we obtain a span of groups, by homming the cospan \eqref{CospanOfPuncturedDisks} into $\mathcal{A}$ and passing to $\pi_1$:
\begin{equation}
  \label{SpanOfFundamentalGroups}
  \begin{tikzcd}[
    column sep=-35pt
  ]
    &
    \pi_1\bracket({
      \mathcal{L}^{(2)}
      \mathcal{A},
      (\gamma_1,\gamma_2)
    })
    \ar[
      dl
    ]
    \ar[
      dr,
    ]
    \\
    \substack{
      \phantom{\times}\,
      \pi_1\scaledbracket({
        \mathcal{L}\mathcal{A},
        \gamma_2
      })
      \\
      \times
      \pi_1\scaledbracket({
        \mathcal{L}\mathcal{A},
        \gamma_1
      })
    }
    &&
    \pi_1\bracket({
      \mathcal{L}\mathcal{A},
      \gamma_2\star\gamma_1
    })
  \end{tikzcd}
  :=
  \pi_1\,
  \mathrm{Map}\left(
  \begin{tikzcd}[
    column sep=-40pt
  ]
    &
    \bracket({
      D^2_{<1} 
        \setminus 
      \{-\sfrac{1}{3}, +\sfrac{1}{3}\}
    })
    \\
    \substack{
      \phantom{\sqcup}\,
      \bracket({
        D^2_{< 1}
          \setminus \{-\sfrac{1}{3}\}
      })
      \\
      \sqcup\,
      \bracket({
        D^2_{< 1}
          \setminus \{+\sfrac{1}{3}\}
      })
    }
    \ar[ur, hook]
    &&
    \substack{
      \bracket({
        D^2_{< 1}
          \setminus
        D^2_{\leq 2/3}
      })
      \\
      \mathllap{\sim \;}
      \bracket({
        D^2_{< 1}
        \setminus
        \{0\}
      })
    }
    \ar[
      ul,
      hook'
    ]
  \end{tikzcd}
  \;,\;
  \mathcal{A}
  \right) 
  \mathrlap{.}
\end{equation}
This encodes how the parameter monodromy in the vicinity of each defect separately relates to that of their joint vicinity. Concretely:
\begin{lemma}
The span of groups \cref{SpanOfFundamentalGroups} is:
\begin{equation}
  \label{EvaluatingTheSpanOfGroups}
  \begin{tikzcd}[
    column sep=-40pt
  ]
    &
    \pi_1\bracket({
      \mathcal{L}^{(2)}
      \mathcal{A},
      (\gamma_1,\gamma_2)
    })
    \ar[
      dl
    ]
    \ar[
      dr,
    ]
    \\
    \substack{
      \phantom{\times}\,
      \pi_1\scaledbracket({
        \mathcal{L}\mathcal{A},
        \gamma_2  
      })
      \\
      \times
      \pi_1\scaledbracket({
        \mathcal{L}\mathcal{A},
        \gamma_1
      })
    }
    &&
    \pi_1\bracket({
      \mathcal{L}\mathcal{A},
      \gamma_2\star\gamma_1
    })
  \end{tikzcd}
  \;\;=\;\;
  \begin{tikzcd}[
   column sep=-20pt
  ]
    &[-20pt]
    Z_G(g_2)
    \cap
    Z_G(g_1)
    \ar[
      dl,
      hook',
      "{ i }"{swap}
    ]
    \ar[
      dr,
      hook,
      "{ o }"
    ]
    \\
    Z_G(g_2)
    \times
    Z_G(g_1)
    &&
    Z_G( g_2 g_1 )
    \mathrlap{\,.}
  \end{tikzcd}
\end{equation}
\end{lemma}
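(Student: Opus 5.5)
The plan is to identify each vertex of the span \cref{SpanOfFundamentalGroups} separately, and then to identify the two legs by checking their composites with the evaluation maps to $\mathcal{A}$.

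\textbf{Vertices.} All three vertices are already determined by \cref{pi1LA} and \cref{pi1L2A}, under the standing assumption $\pi_2(\mathcal{A}) \simeq \ast$.
\begin{itemize}
\item The apex is $Z_G(g_1)\cap Z_G(g_2)$, embedded in $G$ via $\pi_1(\mathrm{ev})$.
\item The left corner is $Z_G(g_2)\times Z_G(g_1)$, with each factor embedded in $G$ via its own $\pi_1(\mathrm{ev})$.
\item For the right corner, the restriction of $(\gamma_1,\gamma_2)$ to the outer annulus $D^2_{<1}\setminus D^2_{\leq 2/3}$ is, up to based homotopy, the boundary loop around both punctures. Choosing the base point on the real axis near the outer boundary, with arcs to the punctures suitably arranged, this loop is the concatenation $\gamma_2\star\gamma_1$ and represents $g_2g_1$. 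So \cref{pi1LA} identifies the right corner with $Z_G(g_2g_1)$.
\end{itemize}

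\textbf{Legs.} Both legs are induced by restriction maps of mapping spaces. To make them compatible with evaluation, I will use a single base point $x_0$ for the evaluation, lying in the intersection of all the subdomains involved (e.g.\ near the outer boundary on the real axis). Since the three pieces sit inside the 2-punctured disk, restriction commutes with evaluation at $x_0$. Each restriction map is therefore a map over $\mathcal{A}$, and the squares formed with the various $\mathrm{ev}$ commute strictly. Applying $\pi_1$, each leg commutes with the embeddings into $G$. Because those embeddings are injective, the legs are determined by this compatibility:
\begin{itemize}
\item the right leg is the inclusion $o\colon Z_G(g_1)\cap Z_G(g_2) \hookrightarrow Z_G(g_2g_1)$;
\item the left leg is the diagonal $k\mapsto (k,k)$, which is the map $i$.
\end{itemize}
Both legs are injective, as the diagram asserts.

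\textbf{Main obstacle.} The delicate point is bookkeeping of base points. The equivalence \eqref{FreeLoopSpaceAsMapsFromAnnulus} for each once-punctured piece $D^2_{<1}\setminus\{\pm\sfrac13\}$ must be implemented by evaluating at the common point $x_0$ along the loop encircling that puncture. The restricted map must then yield exactly the based loop $\gamma_i$ rather than a conjugate of it; otherwise the centralizers would be identified only up to conjugation.

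I will handle this by fixing the loops around the punctures as standard generators of the free group $\pi_1(D^2\setminus\{\pm\sfrac13\},x_0)$, so that the outer loop is their product in the stated order. The deformation retractions are then chosen to preserve $x_0$. Once this is arranged, the identifications in \cref{pi1LA} and \cref{pi1L2A} are all computed with respect to the same $x_0$ and the same lift of the basepoint path. Naturality of the homotopy long exact sequence \cref{HomotopyLES} under maps of fibrations over $\mathcal{A}$ then gives the claimed identification of the span.
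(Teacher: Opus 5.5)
Your proposal is correct and is essentially the paper's argument. The paper's proof consists only of citing \cref{pi1LA,pi1L2A} (under $\pi_2(\mathcal{A}) \simeq \ast$). Your identification of the legs as the diagonal $i$ and the inclusion $o$, using strict compatibility of the restriction maps with evaluation at a common base point $x_0$ and injectivity of $\pi_1(\mathrm{ev})$, just spells out what the paper leaves implicit.

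One point should be stated explicitly: the pieces of the cospan \cref{CospanOfPuncturedDisks} are only meant up to homotopy, since as written they are not subspaces sharing a common point. So a common $x_0$ means either choosing homotopy-equivalent subdomains through $x_0$ (e.g.\ slit disks), or transporting along paths from $x_0$, which changes the identifications only by conjugation.
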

\begin{proof}
  This follows by  \cref{pi1LA,pi1L2A}.
\end{proof}

We still need to say how the quantum states in the vicinity of the separate defects relate to the quantum states in their joint vicinity, hence how quantum states ``fuse'' as the defects approach each other.  The natural mathematical reflection of this process is the ``pull-tensor-push'' or ``integral transform'' 
(cf. \parencites[Def. 7.6]{Sc14-LinTypes}[(148)]{SS25-Complete})
of quantum states through the span \cref{EvaluatingTheSpanOfGroups}, where 
\begin{enumerate}
  \item
  the ``pull''-operation $i^\ast$ 
  moves the quantum states around the pair of separate defects to the joint situation;
  \item 
  the tensor product combines them to states of the composite system;
  \item
  the ``push''-operation $o_!$ sums up the resulting contributions to quantum states around the single defect obtained by merging the separate defects:
\end{enumerate}
\begin{equation}
  \label{TheIntegralTransform}
  \begin{tikzcd}[
    column sep=5pt,
    row sep=-1pt
  ]
    \mathrm{Irr}\bracket({
      Z_G(g_2)
    })
    \times
    \mathrm{Irr}\bracket({
      Z_G(g_1)
    })
    \ar[
      rr,
      shorten=-2pt,
      "{ i^\ast }"
    ]
    &&
    \mathrm{Rep}\bracket({
      Z_G(g_2,g_1)
    })
    \ar[
      rr,
      shorten=-2pt,
      "{ o_! }"
    ]
    &&
    \mathrm{Rep}\bracket({
      Z_G(g_2 g_1)
    })
    \\
    (
      \rho_2,\rho_1
    )
    &\mapsto&
      i^\ast\bracket({\rho_2})
      \otimes
      i^\ast\bracket({\rho_1})
    &\mapsto&
     o_! \bracket({
      i^\ast\bracket({\rho_2})
      \otimes
      i^\ast\bracket({\rho_1})
    })
    \mathrlap{\,,}
  \end{tikzcd}
\end{equation}
where $i^\ast(-)$ denotes restriction of representations and $o_!(-)$ denotes (left) induction of representations (cf. \cref{OnInducedRepresentations}), and where we are now abbreviating
joint centralizers as
$
  Z_G(g_2,g_1)
  :=
  Z_G(g_2) \cap Z_G(g_1)
  \mathrlap{\,.}
$

This is the natural TQFT operation on parameter monodromy which is  associated with the trinion cobordism (two incoming punctures, one outgoing) illustrated on the right of \cref{CospanOfPuncturedDisks}, cf. \cref{TheCobordism}.

\begin{SCfigure}[2.5][htb]
\caption{\label{TheCobordism}
The fusion process of a pair of defects into a single defect, modeled as punctures in the Brillouin torus as in \cref{CospanOfPuncturedDisks}, traces out the cobordism known as the \emph{trinion} or \emph{pair-of-pants}. The quantum states are thereby transformed by the integral transform \cref{TheIntegralTransform} through the corresponding span of parameter monodromy groups \cref{SpanOfFundamentalGroups}.
}
\includegraphics
  [width=4cm]{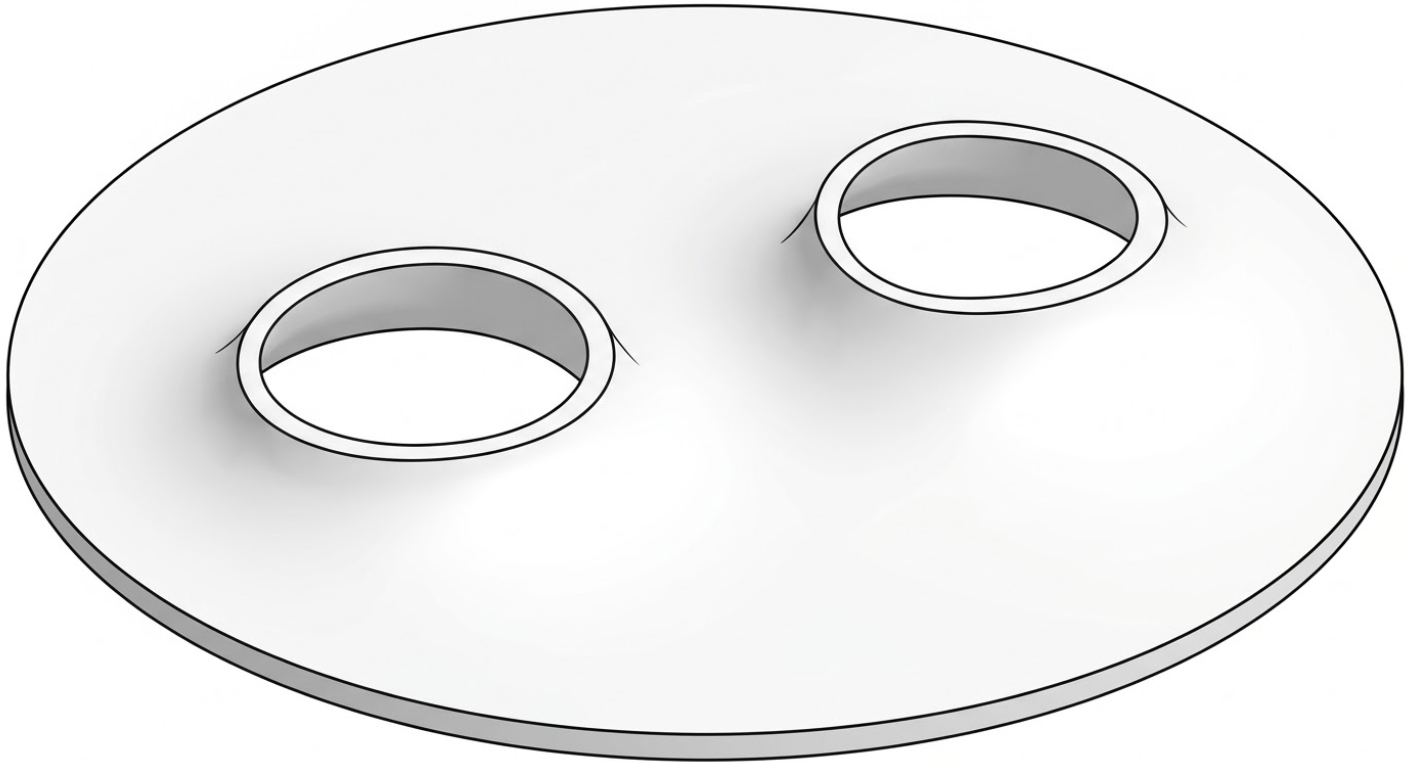}
\end{SCfigure}

But by Frobenius reciprocity, $o_! \dashv o^\ast$, this means that the multiplicity of an irrep $\rho \in \mathrm{Irr}\bracket({Z_G(g_2 g_1)})$ in the result of this pull-push is
\begin{equation}
  \mathrm{dim}
  \,
  \mathrm{Hom}
  \bracket({
    o_! i^\ast
    \bracket({
      \rho_2 \boxtimes \rho_1
    })
    ,\,
    \rho
  })
  =
  \mathrm{dim}
  \,
  \mathrm{Hom}
  \bracket({
    i^\ast
    \bracket({
      \rho_2 \boxtimes \rho_1
    })
    ,\,
    o^\ast\rho
  })
  \mathrlap{\,.}
\end{equation}
On the right this is the expression for the contribution, over a fixed decomposition $g = g_2 g_1$, to the fusion coefficient in the Drinfeld center of $G$-graded vector spaces (\cref{FusionCoefficientsInDrinfeldCenterOfGVectorSpaces}). 

So far, this describes the fusion of quantum states in a pair of fixed topological phases $[\gamma_1, \gamma_2]$. In order to account also for the fusion of these topological phases, we need to sum the above construction over all ways that the pair of topological phases $\bracket({[g_1], [g_2]})$ may fuse to the topological phase $[g_1 g_2]$. This count is established by the following (here we are abbreviating $g^k := k g k^{-1}$):
\begin{lemma}
  \label[lemma]{TheFusionMultiplicity}
  For $g_1, g_2 \in G$,
  the fiber over a conjugacy class $[g] \in \mathrm{Conj}(G)$ of the group multiplication map restricted to their $G$-orbits, $\inlinetikzcd{(-)\cdot(-) : [g_1]\times[g_2] \ar[r] \& G }$, and quotiented by the (diagonal) adjoint action, 
  is in bijection with the set
  \begin{equation}
    \label{DoubleCosetInProductOrbitDecomposition}
    \FusionMultiplicity_{g_1,g_2}^g
    :=
    \bracketmid\{{
      [k_1,k_2] \in
      Z_G(g) 
        \big\backslash  
      G^2 
        \big/ 
      \bracket({Z_G(g_1) \times Z_G(g_2)})
    }{
      g_1^{k_1} g_2^{k_2} = g
    }\}
    \mathrlap{\,.}
  \end{equation}
  in that:
  \begin{equation}
    \begin{tikzcd}
      R_{g_1,g_2}^g
      \ar[
        dr,
        phantom,
        "{ \lrcorner }"{pos=.1}
      ]
      \ar[
        r,
        hook
      ]
      \ar[
        d
      ]
      &
      \bracket({
        [g_1] \times [g_2]
      })\sslash_{\!\mathrm{Ad}} G
      \ar[
        d,
        "{
          (-)\cdot(-)
        }"
      ]
      \\
      {[g]}
      \ar[r, hook]
      &
      G \sslash_{\!\mathrm{Ad}} G
      \mathrlap{\,.}
    \end{tikzcd}
  \end{equation}
\end{lemma}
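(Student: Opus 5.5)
The plan is to compute the (homotopy) fiber of the action groupoid morphism $(-)\cdot(-)\colon \bracket({[g_1]\times[g_2]})\sslash_{\!\mathrm{Ad}} G \to G\sslash_{\!\mathrm{Ad}} G$ over the object $g$, and then read off its set of connected components. Since both sides are action groupoids for the same group $G$ and the map is $G$-equivariant, the homotopy pullback along the inclusion of the component $[g] \hookrightarrow G\sslash_{\!\mathrm{Ad}}G$ (which we model by $\ast\sslash Z_G(g)$, i.e.\ the object $g$ with its automorphisms) is again an action groupoid. Its objects are the pairs $(h_1,h_2) \in [g_1]\times[g_2]$ with $h_1 h_2 = g$ strictly, and its morphisms are given by the diagonal adjoint action of $Z_G(g)$. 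This is the standard fact that the homotopy fiber of $X\sslash G \to Y\sslash G$ over $y$ is $X_y\sslash \mathrm{Stab}(y)$ for an equivariant map $X\to Y$. I would state this briefly and justify it by the explicit comma-groupoid description: an object of the comma groupoid is a point $x$ with a morphism $k\colon f(x)\to g$, and every such object is isomorphic to one with $k=e$ by acting with $k$.

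Next I parametrize the objects. The orbit $[g_i]$ is identified with $G/Z_G(g_i)$ via $k_i \mapsto g_i^{k_i}$. The set of pairs $(h_1,h_2)$ with $h_1h_2=g$ is therefore in bijection with
\[
\bracketmid\{{ (k_1,k_2)\in G^2 }{ g_1^{k_1}g_2^{k_2}=g }\}\big/\bracket({Z_G(g_1)\times Z_G(g_2)}),
\]
where the right action is by multiplication $(k_1,k_2)\mapsto(k_1z_1,k_2z_2)$. This is well defined because $g_i^{k_iz_i}=g_i^{k_i}$ for $z_i\in Z_G(g_i)$. Under this identification the diagonal $\mathrm{Ad}$-action of $c\in Z_G(g)$ becomes left multiplication $(k_1,k_2)\mapsto(ck_1,ck_2)$, since $(g_i^{k_i})^c=g_i^{ck_i}$. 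This action preserves the constraint precisely because $c$ centralizes $g$. Taking $\pi_0$, that is, quotienting the objects by the morphisms, then gives exactly the double coset set $\FusionMultiplicity^g_{g_1,g_2}$ of \eqref{DoubleCosetInProductOrbitDecomposition}. I also need to note that the constraint $g_1^{k_1}g_2^{k_2}=g$ is invariant under both the left and the right actions, so that it descends to a condition on double cosets.

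The main point requiring care is the first step: making precise in what sense the square in the statement is a pullback (of groupoids, up to homotopy), and checking that it is independent of the representative $g$ of $[g]$. I would address the independence by observing that replacing $g$ by $g^c$ and $k_i$ by $ck_i$ gives a canonical bijection. I would also remark that, if one reads the statement purely set-theoretically, the same computation shows that the fiber of $\pi_0$ of the map over $[g]$ is the quotient of $\bracketmid\{{(h_1,h_2)}{h_1h_2\in[g]}\}$ by $G$. By conjugating so that the product equals $g$ on the nose, this quotient reduces to the $Z_G(g)$-quotient above, which yields the same answer.
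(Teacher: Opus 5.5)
Your argument is correct and is essentially the paper's proof. The paper's inverse map $\phi'$ performs exactly your two steps, written as explicit mutually inverse maps of orbit sets rather than as a groupoid equivalence followed by orbit--stabilizer. It first conjugates $(q_1,q_2)$ by a chosen $c$ with $(q_1q_2)^c=g$ (your reduction from the $G$-quotient to the $Z_G(g)$-quotient), and then picks $k_i$ with $g_i^{k_i}=q_i^c$ (your identification $[g_i]\simeq G/Z_G(g_i)$). Your packaging has the small bonus of also exhibiting the isotropy groups used in \cref{OnTheDecompositionOfProductOrbits}.

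One wording fix: what you compute is the pullback along $\mathbf{B}Z_G(g)\hookrightarrow G\sslash_{\!\mathrm{Ad}}G$ (equivalently, the preimage of the component of $g$). It is not ``the homotopy fiber over $y$''. The homotopy fiber over the point $g$ is just the discrete set $\{(h_1,h_2)\mid h_1h_2=g\}$ and would miss the $Z_G(g)$-quotient.
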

\begin{proof}
Unwinding the definitions, one finds that the following maps between $\FusionMultiplicity_{g_1,g_2}^g$ and the fiber set are well-defined:
\begin{equation}
  \begin{tikzcd}[
    row sep=0pt, 
    column sep=15pt
  ]
    \FusionMultiplicity_{g_1,g_2}^g
    \ar[
      rr,
      "{
        \phi
      }"
    ]
    &&
    \bracketmid\{{
      [q_1,q_2]
      \in
      \bracket({
        [g_1] \times [g_2]
      })\sslash_{\!\mathrm{Ad}} G
    }{
      [q_1 q_2] = [g] 
    }\}
    \\
    {[k_1,k_2]}
    \ar[
      rr,
      |->,
      shorten=6pt
    ]
    &&
    \bracketmid\{{
      ({
        g_1^{k_1}, g_2^{k_2}
      })^k
      :=
      \bracket({
        k k_1 g_1 k_1^{-1} k^{-1}
        ,
        k k_2 g_2 k_2^{-1} k^{-1}        
      })
    }{
      k \in G
    }\}
  \end{tikzcd}
\end{equation}
and
\begin{equation}
  \begin{tikzcd}[
    row sep=0pt, 
    column sep=15pt
  ]
    \bracketmid\{{
      [q_1,q_2]
      \in
      \bracket({
        [g_1] \times [g_2]
      })\sslash_{\!\mathrm{Ad}} G
    }{
      [q_1 q_2] = [g] 
    }\}
    \ar[
      rr,
      "{ \phi' }"
    ]
    &&
    \FusionMultiplicity_{g_1,g_2}^g
    \\
    {[q_1, q_2]}
    \ar[
      rr,
      |->,
      shorten=10pt
    ]
    &&
    \bracket[{
     \ln_{g_1}\bracket({
        q_1^{\ln_{q_1 q_2}(g)}
      }),
      \ln_{g_2}\bracket({
        q_2^{\ln_{q_1 q_2}(g)}
      })
    }]
    \mathrlap{\,,}
  \end{tikzcd}
\end{equation}
where $\ln_g(g') \in G$ denotes any choice of solution to the equation:
\begin{equation}
  g^{\ln_g(g')} 
  =
  g'
  \mathrlap{\,.}
\end{equation}
But these functions are inverse to each other, since
$$
  \begin{aligned}
  \phi'\bracket({
    \phi\bracket({
      [k_1, k_2]
    })
  })
  &
  \defneq
  \phi'\bracket({
    \bracket[{
      g_1^{k_1}, g_2^{k_2}
    }]
  })
  \\
  &
  \defneq
  \bracket[{
    \ln_{g_1}\bracket({g_1^{k_1}}),
    \ln_{g_2}\bracket({g_2^{k_2}})
  }]
  \\
  &
  =
  [k_1, k_2]
  \mathrlap{\,,}
  \end{aligned}
  \;\;\;\;
  \begin{aligned}
    \phi\bracket({
      \phi'\bracket({
        [q_1,q_2]
      })
    })
    & \defneq
    \phi\bracket({
      \bracket[{
       \ln_{g_1}\bracket({
          q_1^{\ln_{q_1 q_2}(g)}
        }),
        \ln_{g_2}\bracket({
          q_2^{\ln_{q_1 q_2}(g)}
        })
      }]
    })
    \\
    & \defneq
    \bracket[{
      q_1^{\ln_{q_1 q_2}(g)},
      q_2^{\ln_{q_1 q_2}(g)}
    }]
    \\
    & = 
    [{q_1, q_2}]
    \mathrlap{\,,}
  \end{aligned}
$$
which establishes the claimed bijection.
\end{proof}

In summary, this says that the fusion of a pair of local topological orders $\bracket({[g_1], \rho_1}), \bracket({[g_2],\rho_2})$ contains a superselection sector topological order $\bracket({[g], \rho})$ with multiplicity
\begin{equation}
  N_{
    ([g_1],\rho_1),
    ([g_2], \rho_2)
  }
  ^{
    ([g],\rho)
  }
  =
  \sum_{
    [k_1,k_2]
      \in
    \FusionMultiplicity_{g_1, g_2}^g
  }
  \!\! \mathrm{dim}
  \,
  \mathrm{Hom}\bracket({
    i^\ast \rho_1
    \otimes
    i^\ast \rho_2
    ,
    o^\ast \rho
  })
  \mathrlap{\,.}
\end{equation}

Remarkably, these are exactly the fusion coefficients in the Drinfeld center fusion category $\mathcal{Z}\bracket({\mathrm{Vec}_G})$, as we prove in \cref{FusionCoefficientsInDrinfeldCenterOfGVectorSpaces}.  In summary, expanding on the conclusion of \cref{OnSimpleObjects}:

\begin{standout}
  With the topological orders in the vicinity of defect points labeled by simple objects of $\mathcal{Z}\bracket({\mathrm{Vec}_G})$, the fusion of topological orders when defect points merge is described by the fusion rules of $\mathcal{Z}\bracket({\mathrm{Vec}_G})$. 
\end{standout}

\section{Conclusion}

After recalling (in \cref{OnHomotopyOfSpacesOfBlochHamiltonians}) that (anyonic) topological order is characterized by representations on gapped quantum ground states of the external parameter monodromy, we computed this monodromy for crystalline systems in the vicinity of a defect point in the Brillouin zone (set up in \cref{OnLocalBlochHamiltonianMonodromy}), as it may be relevant notably for recently established fractional Chern insulators. 

Under the assumption that the classifying space $\mathcal{A}$ of Bloch Hamiltonians (recalled in \cref{OnClassifyingSpacesOfBlochHamiltonians}) has trivial $\pi_2$ (as is the case in systems of interest, cf. \cref{PTSymmetric3BandSystems}), we found (in \cref{OnSimpleObjects}) that the superselection sectors (where the parameter monodromy representation is irreducible) correspond precisely to the simple objects of the Drinfeld center fusion category $\mathcal{Z}\bracket({\mathrm{Vec}_G})$ (recalled in \cref{OnTheDrinfeldCenter}), and (in \cref{OnFusion}) that under merging of defects these topological orders fuse according to the fusion rules of $\mathcal{Z}\bracket({\mathrm{Vec}_G})$ (which we establish in \S\ref{OnTheFusionProduct}).

This seems to be noteworthy, since (as recalled in \cref{Introduction}) fusion categories in general and the Drinfeld double model in particular are well-known from theoretical lattice models of anyonic topological order, but have not previously been explicitly related to the fractional crystalline topological phases that are currently the most prominent, if not the only, candidate platform for real-world topologically ordered quantum materials. 

In particular, contrary to common lattice models, the topological order analyzed here is localized not in position space but in the crystal's momentum space (around a defect point there, such as a band node), in line with the recent arguments of \parencites{SS25-FQAH}{SS25-Crys}.

Notably, by extension our analysis suggests/predicts that, besides the local quantum state monodromy in the vicinity of each defect point in momentum space, the adiabatic movement of the defect points themselves around each other should induce on the quantum states the braiding operation \cref{BraidingInDrinfeldCenter} of $\mathcal{Z}\bracket({\mathrm{Vec}_G})$ --- which is nonabelian when $G$ is nonabelian (as in \cref{PTSymmetric3BandSystems}), as required for universal topological quantum hardware. A similar result in FQH systems, of nonabelian braiding of defects, was recently argued in \cite{SS26-Islands}.

In this way, the theoretical analysis presented here may help guide ongoing experimental searches for topological order in quantum materials, and thus eventually for much anticipated future hardware for topologically stabilized quantum computers.

\appendix

\section{Background}

For reference in the main text, here we briefly set up and cite relevant background material.  Most is classical or well-known, but the specific account of Drinfeld fusion in \S\ref{OnTheFusionProduct} appears to be new, cf. \cref{LiteratureOnDrinfeldFusionProduct}.

\subsection{Homotopy}
\label{OnHomotopy}

We assume background in basic homotopical topology (cf. \cite{Hatcher2002,Fomenko2016,Arkowitz2011}); this here is just to briefly recall the phenomenon of \emph{homotopy long exact sequences} (\cref{HomotopyLES}).

We work in the category $\mathrm{Top}$ of compactly generated topological spaces. 

For a pair  $X, Y \in \mathrm{Top}$, the set of continuous maps between them becomes a topological space, $\mathrm{Map}\bracket({X,Y})$, via the \emph{compact open topology}. If both spaces are equipped with base points, $x_0 \in X$ and $y_0 \in Y$, then we write $\mathrm{Map}^\ast\bracket({X,Y})$ for the subspace of maps that respect these base points:
\begin{equation}
  \label{MappingSpace}
  \mathrm{Map}^\ast\bracket({X,Y})
  \subset
  \mathrm{Map}\bracket({X,Y})
  \mathrlap{\,.}
\end{equation}

For example, for
\begin{equation}
  I^n
  :=
  [0,1]^n \subset \mathbb{R}^n
\end{equation}
the closed $n$-cube, an \emph{$n$-path} in $Y$ is a continuous map $\inlinetikzcd{I^n \ar[r]\& Y}$ and this is \emph{based} if it takes the boundary $\partial\bracket({I^n})$ to the basepoint of $Y$. 
Endpoint evaluation of unbased 1-paths is an equivalence relation
\begin{equation}
  \begin{tikzcd}
    \mathrm{Map}\bracket({
      I,Y
    })
    \ar[
      rr,
      shift left=3pt,
      "{ (-)(0) }"
    ]
    \ar[
      rr,
      shift right=3pt,
      "{ (-)(1) }"{swap}
    ]
    &&
    Y
    \mathrlap{\,.}
  \end{tikzcd}
\end{equation}
The corresponding equivalence classes form the set of \emph{path connected components}
\begin{equation}
  \label{PathConnectedComponents}
  \pi_0(Y)
  \in
  \mathrm{Set}
  \mathrlap{\,.}
\end{equation}

Notably the \emph{homotopy class} $[f]$ of a map $\inlinetikzcd{X \ar[r, "{f}"] \& Y}$ is its connected component in the mapping space \cref{MappingSpace}, 
\begin{equation}
  [f]
  \in
  \pi_0\bracket({
    \mathrm{Map}({
      X,Y
    })
  })
  \mathrlap{\,,}
\end{equation}
and a map is a \emph{homotopy equivalence} if that class has an inverse class:
\begin{equation}
  \label{HomotopyEquivalence}
  \text{$f$ is homotopy equivalence}
  \;\;\;\;
  \Leftrightarrow
  \;\;\;\;
  \displaystyle{
  \exists \;
    \overline{f} 
      \in 
    \mathrm{Map}\scaledbracket({Y,X})}
    \; \mathrm{s.t.} \;
  \left\{
  \substack{
    g \circ f
    \,=\,
    \bracket[{
      \mathrm{id}_X
    }]
    \\  
    f \circ g
    \,=\,
    \bracket[{
      \mathrm{id}_Y
    }]
    \mathrlap{\,.}
  }
  \right.
\end{equation}
Existence of homotopy equivalences \cref{HomotopyEquivalence} between spaces is an equivalence relation, to be denoted
\begin{equation}
  \label{HomotopyEquivalenceRelation}
  X \sim Y
  \;\;\;\;
  \Leftrightarrow
  \;\;\;\; 
  \text{$\exists$ homotopy equivalence $\inlinetikzcd{X \ar[r] \& Y}$}
  \mathrlap{\,.}
\end{equation}

Similarly, the mapping spaces \cref{MappingSpace} of the circle $X \defneq S^1$ (equipped with any base point) gives the \emph{based loop space} and the \emph{free loop space} of $Y$, respectively:
\begin{equation}
  \label{BasedAndFreeLoopSpace}
  \Omega Y 
    := 
  \mathrm{Map}^\ast\bracket({S^1,Y})
  \,,
  \;\;\;
  \mathcal{L}Y
    :=
  \mathrm{Map}\bracket({S^1, Y})
  \mathrlap{\,.}
\end{equation}
Generally for  $X \defneq S^n$ the usual $n$-sphere (with any choice of base point), we have the \emph{$n$-fold based loop space}
\begin{equation}
  \label{BasedHigherLoopSpaces}
  \Omega^n X
    \simeq
  \mathrm{Map}^\ast\bracket({
    S^n,  X
  })
  \,.
\end{equation}

Under the evident homeomorphism
\begin{equation}
  S^n 
  \simeq
  I^n\big/\partial\bracket({I^n})
\end{equation}
an element of $\mathrm{Map}^{(\ast)}\bracket({S^n, Y})$ is equivalently a (based) $n$-path which is constant on $\partial\bracket({I^n})$. 

Now, with $x$ denoting the first and $\vec y$ denoting the remaining canonical coordinate functions on $I^n$, the \emph{concatenation} of a $k$-tuple of based $n$-paths $\big({\inlinetikzcd{\gamma_i : I^n \ar[r] \& Y }}\big)_{i = 0}^{k-1}$ is the based path
\begin{equation}
  \label{ConcatenationOfNPaths}
  \gamma_{(k-1)}
  \star
  \cdots
  \star
  \gamma_0
  \;:\;
  (x,\vec y\,)
  \longmapsto
  \begin{cases}
    \gamma_0\bracket({
      k x,
      \vec y\,
    })
    &
    \text{ 
      for $0 \leq x \leq \tfrac{1}{k}$ 
    }
    \\
    \gamma_1\bracket({
      k \bracket({x - \tfrac{1}{k}}),
      \vec y\,
    })
    &
    \text{ 
      for $\tfrac{1}{k} \leq x \leq \tfrac{2}{k}$ 
    }
    \\
    \gamma_i\bracket({
      k \bracket({x - \tfrac{i}{k}}),
      \vec y\,
    })
    &
    \text{ 
      for $\tfrac{i}{k} \leq x \leq \tfrac{i+1}{k}$,
    }
  \end{cases}
\end{equation}
and the \emph{reversal} of a path $\gamma$ is the path
\begin{equation}
  \label{PathReversal}
  \overline{\gamma}
  :
  \bracket({
    x, 
    \vec y\,
  })
  \longmapsto 
  \bracket({
    (1-x)
    ,
    \vec y\,
  })
  \mathrlap{\,.}
\end{equation}

Since the concatenation operation \cref{ConcatenationOfNPaths} is associative \emph{up to homotopy} and the reversal operation  \cref{PathReversal} is an inversion up to homotopy, the connected components \cref{PathConnectedComponents} of the based loop spaces \cref{BasedHigherLoopSpaces} form groups under concatenation, called the \emph{fundamental group} for $n = 1$ and generally the \emph{$n$th homotopy group}:
\begin{equation}
  \label{HomotopyGroups}
  \pi_n(Y, y_0)
    :=
  \pi_0\bracket({
    \Omega^n_{y_0} Y
  })
  \mathrlap{\,,}
\end{equation}
where the subscript (which we will suppress when understood) highlights that this depends on the chosen basepoint, in general. 

But in fact, up to isomorphism these groups depend only on the connected component $[y_0] \in \pi_0(Y)$ of the base point, since conjugate-concatenation with a path $\gamma$ connecting a pair of base points, $\gamma\bracket({0,\vec y}) = y_0$ and $\gamma\bracket({1,\vec y}) = y'_0$, gives a group isomorphism:
\begin{equation}
  \begin{tikzcd}
    \pi_n\bracket({
      Y, y'_0
    })
    \ar[
      rr,
      "{ \gamma \star (-) \star \overline{\gamma} }",
      "{ \sim }"{swap}
    ]
    &&
    \pi_n\bracket({
      Y, y'_0
    })
    \mathrlap{\,.}
  \end{tikzcd}
\end{equation}

Moreover, the homotopy groups \cref{HomotopyGroups} are functorial: Associated to pointed continuous maps are evident group homomorphisms, and this assignment preserves identities and composition (and sends homotopy equivalences \cref{HomotopyEquivalence} to group isomorphisms):
\begin{equation}
  \label{FunctorialityOfHomotopyGroups}
  \begin{tikzcd}[
    row sep=14pt,
    column sep=30pt
  ]
    & 
    Y
    \ar[dr, "{ g }"]
    \\
    X
    \ar[rr, "{ g \circ f }"]
    \ar[ur, "{ f }"]
    &&
    Z
  \end{tikzcd}
  \quad 
  \mapsto
  \quad 
  \begin{tikzcd}[
   row sep=14pt, 
   column sep=25pt
  ]
    &
    \pi_n\bracket(Y)
    \ar[
      dr,
      "{ \pi_n(g) }"{sloped}
    ]
    \\
    \pi_n(X)
    \ar[
      rr,
      "{ \pi_n( g \circ f) }"{sloped}
    ]
    \ar[
      ur,
      "{
        \pi_n(f)
      }"{sloped}
    ]
    &&
    \pi_n(Z)
    \mathrlap{\,.}
  \end{tikzcd}
\end{equation}

Next, a \emph{Serre fibration} is a continuous map $\inlinetikzcd{X \ar[r, "{ f }"] \& Y}$ such that all paths of $n$-paths in $Y$ may be lifted to paths of $n$-paths in $X$ for every choice of lift of the starting point:
\begin{equation}
  \label{SerreFibration}
  \text{
    $p$ is Serre fibration
  }
  \;\;\;\;
  \Leftrightarrow
  \;\;\;\;
  \forall_{n \in \mathbb{N}}
  \;\;
  \begin{tikzcd}
    \{0\} \times I^n
    \ar[
      d,
      hook
    ]
    \ar[
      r,
      "{ \forall }"
    ]
    &
    X
    \ar[d, "{ f }"]
    \\
    I \times I^n
    \ar[
      r,
      "{ \forall }"
    ]
    \ar[
      ur,
      dashed,
      "{ \exists }"
    ]
    &
    Y
    \mathrlap{\,.}
  \end{tikzcd}
\end{equation}
We write
\begin{equation}
  \label{FiberOfSerreFibration}
  \begin{tikzcd}[sep=20pt]
  \bracketmid\{{
    x \in X
  }{
    f(x) = y_0
  }\}
  =: 
  F
  \ar[r, hook, "{ \iota }"]
  &
  X
  \ar[r, "{ f }"]
  &
  Y
  \end{tikzcd}
\end{equation}
for the fiber space of a pointed Serre fibration \cref{SerreFibration}, 

A key example of a Serre fibration \cref{SerreFibration} is
\footnote{
  Under mild conditions, but it is sufficient that $X$ and $Y$ are CW-complexes.
}
the \emph{basepoint-evaluation map} on the mapping space \cref{MappingSpace} out of a pointed space, whose fiber is the pointed mapping space:
\begin{equation}
  \label{EvaluationSerreFibration}
  \begin{tikzcd}[
    row sep=-2pt
  ]
    \mathrm{Map}^\ast\bracket({
      X,Y
    })
    \ar[r, hook]
    &
    \mathrm{Map}\bracket({
      X,Y
    })
    \ar[
      r,
      "{ \mathrm{ev} }"
    ]
    &
    Y
    \\
    & 
    f \;\;
    \ar[
      r,
      |->,
      shorten=10pt
    ]
      &
    f(x_0)
    \mathrlap{\,.}
  \end{tikzcd}
\end{equation}

By functoriality \cref{FunctorialityOfHomotopyGroups}, a Serre fibration \cref{FiberOfSerreFibration} induces homotopy group homomorphisms of the form $\inlinetikzcd{ \pi_n(F) \ar[r] \& \pi_n(X) \ar[r] \& \pi_n(Y) }$. Moreover, there are \emph{connecting homomorphisms}
\begin{equation}
  \label{ConnectingHomomorphism}
  \begin{tikzcd}
    \pi_{n+1}(Y)
    \ar[r, "{ \partial_n }"]
    &
    \pi_n(X)
  \end{tikzcd}
\end{equation}
defined as follows:
Given $[\gamma] \in \pi_{n+1}(Y)$, we may choose 
\begin{enumerate}
\item
a representative 
$\inlinetikzcd{\gamma: I^{n+1} \ar[r] \& Y}$,
\item
a map $\inlinetikzcd{\widehat{\gamma}: I^{n+1} \ar[r] \& Y}$,

\item such that 
{\bf (a)}
  $\widehat \gamma_{
    \vert 
    (
      \partial I^{n+1} 
        \setminus
      \{1\} \times I^n
    )
   }
   = \mathrm{const}_{x_0}$
and {\bf (b)}
  $f \circ \widehat{\gamma} = \gamma$.
\end{enumerate}
Then 
\begin{equation}
  \partial_n\bracket({
    [\gamma]
  })
  :=
  \bracket[{\,
    \widehat{\gamma}_{\vert \{1\} \times I^n}
  }]
  \in
  \pi_n(F)
\end{equation}
is the well-defined value of the connecting homomorphism on $[\gamma]$.

The key fact now is:
\begin{proposition}[Homotopy long exact sequence (cf. {\cite[\S 9.8]{Fomenko2016}})]
\label[proposition]{HomotopyLES}
Given a Serre fibration \cref{SerreFibration}, the long sequence of homomorphisms of homotopy groups given by \cref{FunctorialityOfHomotopyGroups,ConnectingHomomorphism} 
\begin{equation}
  \begin{tikzcd}[
    column sep=50pt
  ]
    {}
    \ar[r, - , dotted]
    &
    \pi_{n+1}(F)
    \ar[r, "{ \pi_{n+1}(\iota) }"]
    &
    \pi_{n+1}(X)
    \ar[r, "{ \pi_{n+1}(f) }"]
    &
    \pi_{n+1}(Y)
    \ar[
      dll,
      snake left,
      "{ \partial_n }"{description}
    ]
    \\
    &
    \pi_{n}(F)
    \ar[r, "{ \pi_{n+1}(\iota) }"]
    &
    \pi_{n}(X)
    \ar[r, "{ \pi_{n+1}(f) }"]
    &
    \pi_{n}(Y)
    \ar[r, -,  dotted]
    &
    {}
  \end{tikzcd}
\end{equation}
is exact, meaning that at each vertex the image of the incident map coincides with the kernel of the outgoing map. This continues to hold for the maps of pointed sets in degree 0, 
\begin{equation}
  \label{TheHomotopyLESInLowDegree}
  \begin{tikzcd}[column sep=35pt]
  &
  {}
  \ar[
    r,
    -,
    dotted
  ]
  & \pi_1(Y)
  \ar[
    dll,
    snake left,
    "{ \partial_0 }"{description}
  ]
  \\
  \pi_0(F)
  \ar[r, "{ \pi_0(\iota) }"]
  &
  \pi_0(X)
  \ar[r, "{ \pi_0(f)}"]
  &
  \pi_0(Y)
  \mathrlap{\,,}
  \end{tikzcd}
\end{equation}
where kernels are understood as preimages of the base point.
\end{proposition}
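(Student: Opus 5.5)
We prove exactness of the homotopy long exact sequence of a Serre fibration $f : X \to Y$ with fiber $F = f^{-1}(y_0)$ and base point $x_0 \in F$. The plan is to first reduce everything to relative homotopy groups, and then check exactness at each of the three kinds of vertices directly.

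\emph{Step 1: relative groups and the fibration identification.} Introduce the relative homotopy sets $\pi_{n+1}(X,F,x_0)$, represented by maps $I^{n+1} \to X$ sending $\{1\}\times I^n$ into $F$ and the rest of the boundary to $x_0$. The key lemma is that
\[
  \pi_{n+1}(f) : \pi_{n+1}(X,F) \to \pi_{n+1}(Y,y_0)
\]
is a bijection, and a group isomorphism for $n \geq 1$.
\begin{enumerate}
\item For surjectivity, take $\gamma : I^{n+1} \to Y$. Since $(I^{n+1}, J^n)$, with $J^n := \partial I^{n+1} \setminus (\{1\}\times I^n)$, is homeomorphic to $(I \times I^n, \{0\}\times I^n)$, the lifting property \cref{SerreFibration} lifts $\gamma$ to some $\widehat\gamma$ that is constant $x_0$ on $J^n$. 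This is exactly the lift used in defining $\partial_n$.
\item For injectivity, apply the same lifting argument one dimension up, to homotopies of maps of pairs, again using the cube pair homeomorphism.
\end{enumerate}
Under this identification, $\partial_n$ becomes the restriction $\pi_{n+1}(X,F) \to \pi_n(F)$ to the face $\{1\}\times I^n$. This also shows that $\partial_n$ is well defined.

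\emph{Step 2: exactness of the pair sequence.} Prove exactness of
\[
  \pi_{n+1}(F) \to \pi_{n+1}(X) \to \pi_{n+1}(X,F) \to \pi_n(F) \to \pi_n(X)
\]
by explicit homotopies, as in \cite[\S 9.8]{Fomenko2016}. For each vertex, the composite is trivial by direct inspection: for instance, $f\circ\iota$ is constant at $y_0$, and restricting to a face of a cube whose boundary lies in $F$ gives a loop that is nullhomotopic in $X$. For each "kernel $\subset$ image" inclusion, use a nullhomotopy to produce a preimage.
\begin{enumerate}
\item At $\pi_n(F)$: if $\widehat\gamma_{\vert\{1\}\times I^n}$ bounds in $X$, glue the nullhomotopy onto the lift $\widehat\gamma$. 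This gives a relative class whose restriction agrees with $\widehat\gamma_{\vert\{1\}\times I^n}$, so the class lies in the image.
\item At $\pi_{n+1}(X)$: a map $I^{n+1} \to X$ whose image is trivial in $\pi_{n+1}(X,F)$ is deformed, through a homotopy of pairs, into $F$.
\end{enumerate}
Transporting this sequence via Step 1 gives exactness of the fibration sequence at $\pi_{n+1}(F)$, $\pi_{n+1}(X)$ and $\pi_{n+1}(Y)$.

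\emph{Step 3: the low-degree tail \cref{TheHomotopyLESInLowDegree}.} Here the terms are only pointed sets, so treat these cases by hand using path lifting, the case $n=0$ of \cref{SerreFibration}.
\begin{enumerate}
\item At $\pi_0(X)$: a point $x$ with $f(x)$ path-connected to $y_0$ is handled by lifting the path starting at $x$. Its endpoint lies in $F$, so $[x]$ is in the image of $\pi_0(F)$.
\item At $\pi_0(F)$: two points of $F$ that are joined by a path in $X$ are related by $\partial_0$ of the image of that path, which is a loop at $y_0$.
\end{enumerate}
In both cases, "kernel" is interpreted as the preimage of the base point.

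I expect the main obstacle to be Step 1's injectivity, together with the careful bookkeeping of which boundary faces are sent to $x_0$ in the lifting arguments; the remaining steps are routine cube manipulations.
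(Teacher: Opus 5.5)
Your proposal is correct and follows the standard argument of the textbook the paper cites in place of a proof. You identify $\pi_{n+1}(X,F)\cong\pi_{n+1}(Y)$ via the lifting property, so that $\partial_n$ becomes restriction to the free face, then transport the exact sequence of the pair $(X,F)$, and handle the $\pi_0$ tail by path lifting. One slip to fix: the argument in Step 2(i), gluing a nullhomotopy onto $\widehat\gamma$, is really the one for exactness at $\pi_{n+1}(X,F)$, and there the nullhomotopy must lie in $F$, not $X$; for exactness at $\pi_n(F)$, a based nullhomotopy in $X$ of the given class, running from the constant map to it, is itself the required relative class.
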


\subsection{Groupoids}

Introduction to groupoids and their representation theory may be found in \cite{IbortRodriguez2021}.

\subsubsection{Groupoid Representations}
\label{OnGroupoidRepresentations}

For $\mathcal{G}$ a groupoid (regarded as small category), a (finite-dimensional, complex) \emph{representation} is simply a functor to the category $\mathrm{Vec}$, the category of finite-dimensional complex vector spaces, and a homomorphism/intertwiner of representations is a natural transformations between these functors. Hence the category of $\mathcal{G}$-representations is just the functor category:
\begin{equation}
  \label{CategoryOfGroupoidRepresentations}
  \mathrm{Rep}(\mathcal{G})
  :=
  \mathrm{Func}\bracket({
    \mathcal{G},
    \mathrm{Vec}
  })
  \mathrlap{\,.}
\end{equation}
For instance, for $G$ a group, with \emph{delooping groupoid}
\begin{equation}
  \label{DeloopingGroupoid}
  \mathbf{B}G
  :=
  \Big\{\!\!\!\!
  \adjustbox{
    raise=-7pt
  }{
  \begin{tikzcd}
    \bullet
    \ar[
      in=50,
      out=180-50,
      looseness=6,
      shift right=4pt,
      "{ 
        \,\mathclap{g}\, 
      }"{description}
    ]
  \end{tikzcd}
  }\!\!
  \Big\vert
   \,
    g \in G
  \Big\}
  \mathrlap{\,,}
\end{equation}
the category of ordinary $G$-representations is equivalently that of groupoid representations of $\mathbf{B}G$.

In fact, for $\pi_0(\mathcal{G})$ the set of connected components, and $\pi_1\bracket({\mathcal{G}, c})$ the isotropy group of an object $c \in \mathrm{Obj}(\mathcal{G})$, the following inclusion (of the delooping groupoids \eqref{DeloopingGroupoid} of the isotropy groups on representative objects in each connected component) is an equivalence
\begin{equation}
  \label{GroupoidEquivalentToDisjointUnionOfDeloopings}
  \begin{tikzcd}[
    row sep=10pt, column sep=large
  ]
  \displaystyle{
  \bigsqcup_{
      [c] \in 
      \pi_0(\mathcal{G})
  }
  }
  \mathbf{B}\bracket({
    \pi_1\bracket({\mathcal{G}, c})
  })
  \ar[
    r,
    hook,
    "{ \iota }",
    "{ \sim }"{swap}
  ]
  &
  \mathcal{G}
  \\[-20pt]
  \;\;\;\;\;\;\bullet_{[c]}
  \ar[
    r,
    |->,
    shorten=5pt
  ]
  &
  c
  \mathrlap{\,,}
  \end{tikzcd}
\end{equation}
so that groupoid representations are equivalently indexed tuples of isotropy group representations:
\begin{equation}
  \label{GroupoidRepAsTupleOfGroupReps}
  \begin{tikzcd}
  \mathrm{Rep}(\mathcal{G})
  \ar[
    r,
    "{ \iota^\ast }",
    "{ \sim }"{swap}
  ]
  &
  \displaystyle{
  \prod_{
      [c] \in 
      \pi_0(\mathcal{G})
  }
  }
  \mathrm{Rep}\bracket({
    \pi_1\bracket({\mathcal{G}, c})
  })
  \mathrlap{\,.}
  \end{tikzcd}
\end{equation}
Notice that for any other choice of representative $c'$ in the connected component $[c]$ we find a morphism $\inlinetikzcd{c \ar[r, "{\gamma_c}"] \& c'  }$ in $\mathcal{G}$.
This induces a group isomorphism along which representations pull back:
\begin{equation}
  \begin{tikzcd}[
    row sep=0pt
  ]
    \pi_1\bracket({
      \mathcal{G},c'
    })
    \ar[
      rr,
      "{
        \gamma_c^{-1}
          \circ 
        (-)
          \circ 
        \gamma_c
      }"
    ]
    &&
    \pi_1\bracket({
      \mathcal{G},c
    })
    \\
    \mathrm{Rep}\bracket({
      \pi_1\bracket({\mathcal{G},c'})
    })
    \ar[
      <-,
      rr,
      "{ (-)^{\gamma_c} }",
      "{ \sim }"{swap}
    ]
    &&
    \mathrm{Rep}\bracket({
      \pi_1\bracket({\mathcal{G},c})
    })
    \mathrlap{\,,}
  \end{tikzcd}
\end{equation}
whence the above equivalence \eqref{GroupoidRepAsTupleOfGroupReps} transforms as follows under change of base points along such choices of morphisms $\gamma_c$:
\begin{equation}
  \label{IsotropyRepsTransformWithBasePoints}
  \begin{tikzcd}[
    row sep=2pt, column sep=large
  ]
  &
  \prod_{
      [c] 
  }
  \mathrm{Rep}\bracket({
    \pi_1\bracket({\mathcal{G}, c})
  })
  \ar[
    dd,
    "{
      \prod_{[c]}
      (-)^{\gamma_c}
    }"
  ]
  \\
  \mathrm{Rep}(\mathcal{G})
  \ar[
    ur,
    "{ \iota^\ast }",
    "{ \sim }"{swap, sloped}
  ]
  \ar[
    dr,
    "{ \iota'^\ast }"{swap},
    "{ \sim }"{sloped}
  ]
  \\
  &
  \prod_{
      [c'] 
  }
  \mathrm{Rep}\bracket({
    \pi_1\bracket({\mathcal{G}, c'})
  })  
  \mathrlap{\,.}
  \end{tikzcd}
\end{equation}

\subsubsection{Inertia Groupoids}

For example, for $G \acts S$ a group action on a set, the corresponding \emph{action groupoid} is
\begin{equation}
  \label{ActionGroupoid}
  S \sslash G
  \defneq
  \left\{
  \begin{tikzcd}[
    ampersand replacement=\&, row sep=small,
    column sep=10pt
  ]
    \&
    g_1 \!\cdot\! s
    \ar[
      dr,
      "{ g_2 }"
    ]
    \\
    s
    \ar[
      ur,
      "{ g_1 }"
    ]
    \ar[
      rr,
      "{ g_2 g_1 }"
    ]
    \&\&
    g_2 g_1 \!\cdot\! s
  \end{tikzcd}
  \,\middle\vert\,
    \substack{
      s \in S
      \\
      g_i \in G
    }
  \right\}
  \,,
\end{equation}
its connected components are the orbits of the action, and the isotropy group of $s \in S$ is the \emph{stabilizer group}
\begin{equation}
  \label{IsotropyGroup}
  \pi_1\bracket({
    S \sslash G
    ,s
  })
  \simeq
  \bracketmid\{{
    g \in G
  }{
    g \cdot s = s
  }\}
  \mathrlap{\,.}
\end{equation}
Specifically, for the adjoint action $G \acts_{\mathrm{Ad}}\, G$ of the group on itself, the action groupoid \eqref{ActionGroupoid} is also known as the \emph{inertia groupoid}:
\begin{equation}
  \label{InertiaGroupoid}
  \Lambda G
  :=
  G \sslash_{\!\mathrm{Ad}} G
  =
  \left\{
  \begin{tikzcd}[row sep=small,
    column sep=10pt]
    &
    g^{k_1}
    \ar[
      dr,
      shorten <=-4pt,
      shorten >=-3pt,
      "{ k_2 }"
    ]
    \\
    g
    \ar[
      ur,
      "{ k_1 }"
    ]
    \ar[
      rr,
      "{ k_2 k_1 }"
    ]
    &&
    g^{k_2 k_1}
  \end{tikzcd}
  \,\middle\vert\,
    k_i, g \in G
  \right\}
  \mathrlap{\,,}
\end{equation}
where we denote conjugation by
\begin{equation}
  g^k := k g k^{-1}
  \mathrlap{.}
\end{equation}
Its isotropy groups \eqref{IsotropyGroup} are the \emph{centralizer subgroups}:
\begin{equation}
  \label{CentralizersAsIsotropyGroups}
  \pi_1\bracket({
    \Lambda G
    ,
    g
  })
  \simeq
  Z_G(g)
  \mathrlap{\,.}
\end{equation}
Notice that we have group isomorphisms
$$
  \begin{tikzcd}[row sep=-3pt,
    column sep=0pt
  ]
    Z_G\bracket({g^k})
    \ar[
      rr,
      "{  }",
      "{ \sim }"{swap}
    ]
    &&
    Z_G({g})
    \\
    q &\longmapsto& q^{(k^{-1})}
  \end{tikzcd}
$$
along which we may pull back group representations:
\begin{equation}
  \label{RepsTransformUnderGroupConjugation}
  \begin{tikzcd}[
    column sep=large]
    \mathrm{Rep}\bracket({
      Z_G(g)
    })
    \ar[
      r,
      "{ (-)^k }",
      "{ \sim }"{swap}
    ]
    &
    \mathrm{Rep}\bracket({
      Z_G\bracket({g^k})
    })
    \mathrlap{\,.}
  \end{tikzcd}
\end{equation}

\subsubsection{Induced Representations}
\label{OnInducedRepresentations}

For $\inlinetikzcd{H \ar[r, hook, "{ \iota }"] \& G}$ a subgroup inclusion, the delooping groupoid $\mathbf{B}H$ \eqref{DeloopingGroupoid} is equivalent to the action groupoid of $G$ acting by left multiplication on the cosets $G/G$
$$
    \mathbf{B}H
    \simeq
    G \backsslash G / H\,.
$$
This resolves $\mathbf{B} \iota$ by a Kan fibration:
$$
  \begin{tikzcd}
    \mathbf{B}H
    \ar[
      rrr,
      uphordown,
      "{ \mathbf{B} \iota }"
    ]
    \ar[
      r,
      "{ e }",
      "{ \sim }"{swap}
    ]
    &
    G \backsslash G/H
    \ar[
      rr,
      "{ p }",
      "{
        \text{fib}
      }"{swap}
    ]
    &&
    \mathbf{B}G
    \mathrlap{\,.}
  \end{tikzcd}
$$
Along the Kan fibration $p$, pushforward $p_!$ is given simply by direct sum of fibers. Under the equivalence $e$ this coincides with forming left induced representations. 

\subsection{Drinfeld Center}
\label{OnTheDrinfeldCenter}

 We recall the \emph{Drinfeld center} $\mathcal{Z}(G) := \mathcal{Z}\bracket({\mathrm{Vec}_G})$ of the category of $G$-graded vector spaces, and the nature of its simple objects (which is classical, cf. \cite[Ex. 8.5.4]{EtingofEtAl2015}), making explicit that it is equivalently the representation category of the adjoint action groupoid (\cref{DrinfeldCenterAsInertiaGroupoidRepresentations}). 
 Then we derive its fusion rules (\cref{FusionCoefficientsInDrinfeldCenterOfGVectorSpaces}, which seem not to be readily citable from the literature; in dual Hopf-algebraic formulation it was only recently claimed in \cite[Thm. 3.2]{Li2026}).

\subsubsection{The Fusion Category}

We begin with recalling relevant definitions and establishing notation (cf. \cite{EtingofEtAl2015} for background on braided tensor categories). 

By $\mathrm{Vec} := \bracket({\mathrm{FD}\mathbb{C}\mathrm{Vec}, \otimes, \mathbb{C}})$ we denote the monoidal category of finite-dimensional complex vector spaces with the usual tensor product. 
For $G$ a group, the monoidal category $\mathrm{Vec}_G$ of \emph{$G$-graded} vector spaces has (cf. \cite[Ex. 2.3.6]{EtingofEtAl2015}):
\begin{enumerate}

\item
as \emph{objects} vector spaces $V \in \mathrm{Vec}$ equipped with $G$-indexed direct sum decomposition
\begin{equation}
  V
  \defneq
  \bigoplus_{g \in G}
  V_g
  \mathrlap{\,,}
\end{equation}

\item
as \emph{morphisms} linear maps that respect the $G$-grading,

\item
as \emph{tensor product} $\fusion$ the ordinary tensor product of underlying vector spaces equipped with the convolution grading:
\begin{equation}
  \label{TensorProductOnGradedVectorSpaces}
  \bracket({V \fusion W})_g
  :=
  \bigoplus_{k \in G}
  \bracket({
    V_{k} \otimes W_{k^{-1} g}
  })
  \mathrlap{\,.}
\end{equation}
We shall refer to $\fusion$ in  \eqref{TensorProductOnGradedVectorSpaces}, and its incarnation \eqref{TensorProductInDrinfeldCenter} in the Drinfeld center below, as the \emph{fusion product}, in order to distinguish it from the degreewise tensor product (cf. Rem. \ref{EquivalenceIsNotMonoidal} below). 
\end{enumerate}

Equivalently, with 
\begin{equation}
  \label{HopfAlgebraOfFunctionsOnG}
  C(G) := \mathrm{Map}\bracket({G,\mathbb{C}})
\end{equation}
denoting the Hopf algebra of complex-valued functions on the underlying set of $G$ (with product the pointwise product of functions and with coproduct and antipode induced by precomposition of functions with the group multiplication and group inverse, respectively), we have that $\mathrm{Vec}_G$ is equivalent (as a monoidal category) to the category of modules over $C(G)$:
\begin{equation}
  \mathrm{Vec}_G 
    \simeq 
  \mathrm{Mod}\bracket({C(G)})
  \mathrlap{\,.}
\end{equation}
(The pointwise product in $C(G)$ induces the $G$-grading and the coproduct induces the degree convolution in \eqref{TensorProductOnGradedVectorSpaces}.)

For example, with $X \in \mathrm{Vec}$ an ungraded vector space and $g_0 \in G$ a group element, we write
\begin{equation}
  \label{GradedVectorSpaceConcentratedInSingleDegree}
  X \delta_{g_0}
  \in 
  \mathrm{Vec}_G
  \,,
  \;\;\;
  \bracket({X \delta_{g_0}})_g
  :=
  \begin{cases}
    X & \text{if $g = g_0$}
    \\
    0 & \text{otherwise}
  \end{cases}
\end{equation}
for the $G$-graded vector space which is concentrated on $X$ in degree $g_0$.

Now, the \emph{Drinfeld center} $\mathcal{Z}\bracket({\mathrm{Vec}_G})$ is, by its general definition (cf. \cite[\S 7.13]{EtingofEtAl2015}), the braided monoidal category which has:
\begin{enumerate}

\item
  as \emph{objects} pairs, $(V,\beta)$, consisting of a $V \in \mathrm{Vec}_G$
  and a natural isomorphism (the \emph{half-braiding})
  \begin{equation}
    \label{TheHalfBraiding}
    \begin{tikzcd}
    \beta_{(-)}
    :
    (-) \widetilde\otimes V
    \ar[r, "{ \sim }"]
    &
    V \widetilde\otimes (-)
    \mathrlap{\,,}
    \end{tikzcd}
  \end{equation}
  satisfying for all $W, W' \in \mathrm{Vec}_G$ the relation
  \begin{equation}
    \label{CoherenceOfHalfBraiding}
    \beta_{W \fusion W'}
    =
    \bracket({
      \beta_{W} 
        \fusion 
      \mathrm{id}_{W'}
    })
      \circ
    \bracket({
      \mathrm{id}_{W} 
        \fusion 
      \beta_{W'}
    })
    \mathrlap{\,,}
  \end{equation}

  \item
  as \emph{morphisms} 
  $\inlinetikzcd{(V,\beta) \ar[r] \& (V',\beta')}$
  linear maps
  $\inlinetikzcd{f : V \ar[r] \& V'}$ such that for 
  all $W \in \mathrm{Vec}_G$ the following diagram commutes (cf. \cite[(7.41)]{EtingofEtAl2015}):
  \footnote{
    We are notationally suppressing the associators. 
  }
  \begin{equation}
    \label{ConditionOnMorphismsInDrinfeldCenter}
    \begin{tikzcd}[
      column sep=50pt
    ]
      W \fusion V
      \ar[
        d,
        "{ \beta_W }"
      ]
      \ar[
        r,
        "{
          \mathrm{id}_W
            \fusion 
          f 
        }"
      ]
      &
      W \fusion V'
      \ar[
        d,
        "{ \beta'_W }"
      ]
      \\
      V \fusion W
      \ar[
        r,
        "{
          f
          \fusion
          \mathrm{id}_W
        }"
      ]
      &
      V' \fusion W
      \mathrlap{\,;}
    \end{tikzcd}
  \end{equation}
  \item
  as \emph{tensor product} the operation which on the underlying vector spaces is \eqref{TensorProductOnGradedVectorSpaces} and extended to the half-braiding as:
  \begin{equation}
    \label{TensorProductInDrinfeldCenter}
    (V,\beta)
    \,\fusion \,
    (V',\beta')
    =
    \bracket({
      V \fusion V',
      (\mathrm{id}_{V} \fusion \beta')
      \circ
      (\beta \fusion \mathrm{id}_{V'})
    })
    \mathrlap{\,,}
  \end{equation}

   \item
   as \emph{braiding} (cf. \cite[(8.15)]{EtingofEtAl2015}) on a pair of objects the half-braiding \cref{TheHalfBraiding} of the second object around the first:
   \begin{equation}
     \label{BraidingInDrinfeldCenter}
     \begin{tikzcd}
       b_{
         (V,\beta),
         (V',\beta')
       }
       :=
       \beta'_{V}       
      \, : \,
       V \fusion V'
       \ar[r, "{ \sim }"]
       &
       V' \fusion V
       \mathrlap{\,.}
     \end{tikzcd}
   \end{equation}
\end{enumerate}

\begin{remark}[{\cite[\S 7.14]{EtingofEtAl2015}}]
\label{TheDrinfeldDouble}
This Drinfeld center $\mathcal{Z}\bracket({\mathrm{Vec_G}})$ is equivalent, as a monoidal category, to the category of modules over the \emph{Drinfeld double} Hopf algebra $D\bracket({C\bracket({G})})$ of \eqref{HopfAlgebraOfFunctionsOnG}:
\begin{equation}
  \mathcal{Z}\bracket({
    \mathrm{Vec}_G
  })
  \simeq
  \mathrm{Mod}\bracket({
    D\bracket({
      C\bracket({G})
    })
  })
  \mathrlap{\,.}
\end{equation}
Much of the literature discusses the Drinfeld center in this dual Hopf algebraic form.
\end{remark}

\subsubsection{Simple Objects via Inertia Irreps}

The above general definition of the Drinfeld center applied to the case of $\mathrm{Vec}_G$ may be simplified using the notion of groupoid representations from \cref{OnGroupoidRepresentations}: 

Since every vector space is isomorphic to a direct sum of copies of $\mathbb{C}$, the linear maps $\beta_{(-)}$ \eqref{TheHalfBraiding} are determined already by their values on $\mathbb{C} \delta_{g_0}$ \eqref{GradedVectorSpaceConcentratedInSingleDegree}; and if we understand that $\mathbb{C} \fusion V = V$, then, by degree reasons, their $g_0 g$-components must be linear isomorphisms $g_0 \cdot (-)$ of this form:
\begin{equation}
  \begin{tikzcd}[
    column sep=50pt,
    row sep=10pt
  ]
    \bracket({
      \mathbb{C} \delta_{g_0}
        \fusion 
      V 
    })_{g_0 g}
    \ar[
      r,
      "{ 
        ({
          \beta_{\mathbb{C}\delta_{g_0}} 
        })_{g_0 g}
      }"
    ]
    \ar[d, equals]
    &
    \bracket({
      V
        \fusion 
      \mathbb{C}\delta_{g_0} 
    })_{g_0 g}
    \ar[d, equals]
    \\
    V_{g} 
    \ar[r, "{ g_0\cdot(-) }"]
      &
    V_{\mathrlap{g_0 g g_0^{-1}.}}
  \end{tikzcd}
\end{equation}
For these, the coherence condition \eqref{CoherenceOfHalfBraiding} reduces to the \emph{action property}
\begin{equation}
  g_2 \cdot \bracket({g_1 \cdot (-)})
  =
  (g_2 g_1) \cdot (-)
  \mathrlap{\,.}
\end{equation}
This way, the objects of $\mathcal{Z}\bracket({\mathrm{Vec}_G})$ are bijectively identified with (complex, finite-dimensional) \emph{groupoid representations} of the inertia groupoid $\Lambda G$ \eqref{InertiaGroupoid}:
\begin{equation}
  (C,\beta)
  \qquad 
  \leftrightarrow
  \qquad
  \begin{tikzcd}[
    row sep=0pt,
    column sep=0pt
  ]
    \Lambda G
    \ar[
      rr
    ]
    &&
    \mathrm{Vec}
    \\
    g
    \ar[
      d,
      "{\, g_1 }"
    ]
    &\longmapsto&
    V_{\mathrlap{g}}
    \ar[
      d,
      "{\, g_1 \cdot (-) }"
    ]
    \\[13pt]
    g_1
    g 
    g_1^{-1}
    &\longmapsto&
    V_{\mathrlap{g_1 g g_1^{-1}.}}
  \end{tikzcd}
\end{equation}
In this representation-theoretic description, the morphisms \eqref{ConditionOnMorphismsInDrinfeldCenter} of $\mathcal{Z}(\mathrm{Vec}_G)$ are equivalently the homomorphisms of groupoid representations (the \emph{intertwiners}). This shows that:
\begin{lemma}
  \label[lemma]{DrinfeldCenterAsInertiaGroupoidRepresentations}
  The Drinfeld center of $\mathrm{Vec}_G$ is equivalently the representation category \eqref{CategoryOfGroupoidRepresentations} of the inertia groupoid of $G$:
  \begin{equation}
    \label{DrinfeldCenterEquivalentToInertiaGroupoidReps}
    \mathcal{Z}\bracket({\mathrm{Vec}_G})
    \simeq
    \mathrm{Rep}\bracket({
      \Lambda G
    })
    \mathrlap{\,.}
  \end{equation}
\end{lemma}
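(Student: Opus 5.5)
We construct an explicit functor $F : \mathcal{Z}\bracket({\mathrm{Vec}_G}) \to \mathrm{Rep}(\Lambda G)$ and show it is fully faithful and essentially surjective, making rigorous the reduction sketched just before the statement.

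\textbf{Step 1: the functor on objects.} Given $(V,\beta)$, set $F(V,\beta)(g) := V_g$. For a morphism $\inlinetikzcd{g \ar[r, "{k}"] \& g^k}$ in $\Lambda G$, let $F(V,\beta)(k)$ be the degree-$(kg)$ component
\[
  V_g = \bracket({\mathbb{C}\delta_k \fusion V})_{kg}
  \xrightarrow{\;(\beta_{\mathbb{C}\delta_k})_{kg}\;}
  \bracket({V \fusion \mathbb{C}\delta_k})_{kg} = V_{kgk^{-1}} .
\]
Functoriality $F(k_2)\circ F(k_1) = F(k_2k_1)$ follows from the coherence \eqref{CoherenceOfHalfBraiding} applied to $W = \mathbb{C}\delta_{k_2}$ and $W' = \mathbb{C}\delta_{k_1}$, together with the canonical identification $\mathbb{C}\delta_{k_2}\fusion\mathbb{C}\delta_{k_1} \simeq \mathbb{C}\delta_{k_2k_1}$ and naturality of $\beta$ along that isomorphism. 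The unit condition $F(e) = \mathrm{id}$ follows from $\beta_{\mathbb{C}\delta_e} = \mathrm{id}$, which comes from coherence with $W = W' = \mathbb{C}\delta_e$ plus invertibility. Since associators and unitors in $\mathrm{Vec}_G$ are trivial, no extra signs or twists enter.

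\textbf{Step 2: the functor on morphisms, and full faithfulness.} A morphism $f$ in the center is grading-preserving, so it has components $f_g : V_g \to V'_g$. Specializing \eqref{ConditionOnMorphismsInDrinfeldCenter} to $W = \mathbb{C}\delta_k$ and reading off the degree-$(kg)$ component gives exactly the intertwiner condition $f_{kgk^{-1}}\circ F(k) = F'(k)\circ f_g$. Conversely, an intertwiner satisfies \eqref{ConditionOnMorphismsInDrinfeldCenter} for every $W$, by the reduction in Step 3. So $F$ is fully faithful.

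\textbf{Step 3: essential surjectivity.} Given a groupoid representation $\rho$, set $V := \bigoplus_g \rho(g)$. For a general $W = \bigoplus_k W_k$, define the half-braiding on the summand $W_k \otimes V_g$ by
\[
  w\otimes v \;\longmapsto\; \rho(k)(v) \otimes w \;\in\; V_{kgk^{-1}} \otimes W_k .
\]
This lands in degree $kg$ on both sides, is natural in $W$ (it is $\mathrm{id}_{W_k}$ tensored with a fixed map), and is invertible. Coherence \eqref{CoherenceOfHalfBraiding} reduces to $\rho(k_2k_1) = \rho(k_2)\rho(k_1)$. By construction $F$ sends this object back to $\rho$.

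\textbf{Main obstacle.} The only subtle point is the reduction used in Steps 2 and 3: showing that a natural transformation $\beta_{(-)}$ on all of $\mathrm{Vec}_G$ is determined by its values on the objects $\mathbb{C}\delta_k$, and that it has the tensor form above. I would handle this via naturality with respect to the inclusions and projections $W_k\delta_k \rightleftarrows W$, and with respect to the maps $\mathbb{C}\delta_k \to W_k\delta_k$ picking out vectors. These show that $\beta_W$ is the direct sum of the maps $\mathrm{id}_{W_k}\otimes(\beta_{\mathbb{C}\delta_k})$, up to the symmetry of $\mathrm{Vec}$.
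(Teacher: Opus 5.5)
Your proposal is correct and takes the same route as the paper. The paper likewise reduces the half-braiding to its components on $\mathbb{C}\delta_{k}$, reads off isomorphisms $V_g \to V_{kgk^{-1}}$ by degree, gets the action property from the coherence condition, and identifies morphisms in the center with intertwiners. Your version makes explicit what the paper leaves implicit: the naturality argument along $W_k\delta_k \rightleftarrows W$ and $\mathbb{C}\delta_k \to W_k\delta_k$, the unit condition, and the inverse construction.

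One caveat applies equally to the paper, which claims validity for non-finite $G$. If $G$ is infinite, the space $\bigoplus_g \rho(g)$ in your Step 3 is finite-dimensional only when $\rho$ is finitely supported. In that case $\mathrm{Rep}(\Lambda G)$ has to be restricted to such representations for the equivalence to hold.
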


\cref{DrinfeldCenterAsInertiaGroupoidRepresentations} implies at once that \emph{simple objects} in $\mathcal{Z}\bracket({\mathrm{Vec}_G})$ (those admitting no nontrivial direct sum decomposition) are supported on conjugacy classes of group elements
\begin{equation}
  \label{ConjugacyClass}
  [g]
  :=
  \bracketmid\{{
    g^k
      =
    k g k^{-1}
  }{
    k \in G
  }\}
  \mathrlap{\,,}
\end{equation}
where they form an (irreducible) groupoid representation (cf. \cref{OnGroupoidRepresentations}) of the connected component $[g] \sslash_{\!\mathrm{Ad}} G \subset G\sslash_{\!\mathrm{Ad}} G$. Since these connected groupoids are equivalent \eqref{GroupoidEquivalentToDisjointUnionOfDeloopings} to the delooping of their isotropy group, $[g] \sslash_{\!\mathrm{Ad}} G \simeq \mathbf{B}Z_G(g)$, these groupoid representations are equivalently \eqref{GroupoidRepAsTupleOfGroupReps} representations of the centralizer group $Z_G(g)$ \eqref{CentralizersAsIsotropyGroups}:

\begin{proposition}
\label[proposition]{SimpleObjects}
The simple objects of $\mathcal{Z}\bracket({\mathrm{Vec}_G})$ are pairs $\bracket({[g],\rho})$ consisting of a conjugacy class $[g]$ and an irrep of $Z_G(g)$:
\begin{equation}
  \label{TheSimpleObjects}
  \Big\{
    \text{\rm simple objects}
  \Big\}
  \,
  \simeq
  \bigsqcup_{
    \!\!\!
    [g] \in
    \mathrm{Conj}(G)
    \!\!\!
  }
  \mathrm{Irr}\bracket({
    [g]\sslash_{\!\mathrm{Ad}}
    G
  })
  \;
  \simeq
  \bigsqcup_{
    \!\!\!
    [g] \in
    \mathrm{Conj}(G)
    \!\!\!
  }
  \mathrm{Irr}\bracket({
    Z_G(g)
  })
  \mathrlap{\,.}
\end{equation}
\end{proposition}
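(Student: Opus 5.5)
The plan is to deduce \cref{SimpleObjects} from \cref{DrinfeldCenterAsInertiaGroupoidRepresentations} by transporting simplicity along the equivalence \eqref{DrinfeldCenterEquivalentToInertiaGroupoidReps}, and then decomposing the inertia groupoid into connected components. First observe that an equivalence of categories preserves and reflects direct sum decompositions (it is additive, being an equivalence of linear categories). Hence simple objects of $\mathcal{Z}\bracket({\mathrm{Vec}_G})$ correspond bijectively, up to isomorphism, to simple objects of $\mathrm{Rep}(\Lambda G)$.

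Next, use that $\Lambda G$ is the disjoint union of its connected components $[g]\sslash_{\!\mathrm{Ad}} G$, indexed by $\mathrm{Conj}(G)$. A functor out of a disjoint union of groupoids is a tuple of functors on the components, so
\[
  \mathrm{Rep}(\Lambda G) \simeq \prod_{[g]\in \mathrm{Conj}(G)} \mathrm{Rep}\bracket({[g]\sslash_{\!\mathrm{Ad}}G}),
\]
where, for infinite $G$, we restrict to finitely many nonzero components so that the underlying total space stays finite-dimensional. Any representation $V$ then splits as the direct sum of its restrictions to components, namely the subfunctors supported on single components. Therefore $V$ is simple precisely when it is supported on exactly one component and is simple there. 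This yields the first bijection with $\bigsqcup_{[g]} \mathrm{Irr}\bracket({[g]\sslash_{\!\mathrm{Ad}}G})$.

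For the second bijection, apply \eqref{GroupoidEquivalentToDisjointUnionOfDeloopings} to the connected groupoid $[g]\sslash_{\!\mathrm{Ad}}G$. Inclusion of the one-object subgroupoid at $g$, with automorphism group $Z_G(g)$ by \eqref{CentralizersAsIsotropyGroups}, is an equivalence $\mathbf{B}Z_G(g)\simeq [g]\sslash_{\!\mathrm{Ad}}G$. Restriction \eqref{GroupoidRepAsTupleOfGroupReps} is then an equivalence $\mathrm{Rep}\bracket({[g]\sslash_{\!\mathrm{Ad}}G})\simeq \mathrm{Rep}(Z_G(g))$, which again preserves simplicity, so irreducibles correspond to $\mathrm{Irr}(Z_G(g))$. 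Independence of the chosen representative $g$ in its class follows from \eqref{IsotropyRepsTransformWithBasePoints} and \eqref{RepsTransformUnderGroupConjugation}: changing $g$ to $g^k$ transports irreps along the isomorphism $Z_G(g^k)\simeq Z_G(g)$, giving a canonical bijection of the labelling sets.

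The only step needing some care is the claim that "simple" agrees with "indecomposable" and with "irreducible". The statement defines simple objects as those with no nontrivial direct sum decomposition, whereas $\mathrm{Irr}$ of a group conventionally means having no proper subrepresentation. For finite $G$, Maschke's theorem makes finite-dimensional representations of $Z_G(g)$ semisimple, so the two notions coincide and the argument is complete. For non-finite $G$, I would either read $\mathrm{Irr}$ as indecomposable representations, matching the paper's definition of simple, or add the standing caveat of the footnote in \cref{Introduction}. This matching of terminology is the main point to address.
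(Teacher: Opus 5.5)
Your proposal is correct and follows the paper's own argument. You transport simplicity along the equivalence $\mathcal{Z}(\mathrm{Vec}_G)\simeq\mathrm{Rep}(\Lambda G)$, split representations over the connected components $[g]\sslash_{\!\mathrm{Ad}}G$, and use $[g]\sslash_{\!\mathrm{Ad}}G\simeq\mathbf{B}Z_G(g)$; your remarks on indecomposable versus irreducible (Maschke for finite $G$) make explicit a point the paper leaves implicit.

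One side remark is slightly off but does not affect the argument for finite $G$. For infinite $G$, restricting to finitely many nonzero connected components does not make the total space finite-dimensional when a conjugacy class is infinite, since a nonzero representation of $[g]\sslash_{\!\mathrm{Ad}}G$ is nonzero at every point of $[g]$. The paper glosses over this convention for non-finite $G$ as well.
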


\subsubsection{The Fusion Product}
\label{OnTheFusionProduct}

\begin{remark}
\label{EquivalenceIsNotMonoidal}
Beware that the above equivalence \eqref{DrinfeldCenterEquivalentToInertiaGroupoidReps} is \emph{not} a monoidal equivalence, as the standard tensor product on groupoid representations is degree-wise, not convolutive as in \eqref{TensorProductOnGradedVectorSpaces}.
\end{remark}

Instead, the fusion product in the Drinfeld center is category-theoretically given as follows:
\begin{lemma}
  \label[lemma]{FusionViaPullPush}
  For a pair of objects $F_1,F_2 \in \mathrm{Rep}(\Lambda G) \simeq \mathcal{Z}\bracket({\mathrm{Vec}_G})$
  \eqref{DrinfeldCenterEquivalentToInertiaGroupoidReps}, their fusion tensor product $\fusion$ \eqref{TensorProductInDrinfeldCenter} is equivalently given by the following pull-tensor-push operation:
  \begin{equation}
    \begin{tikzcd}[
      column sep=40pt,
      row sep=0pt
    ]
      \bracket({
        G\sslash_{\!\mathrm{Ad}}G
      })
      \times
      \bracket({
        G\sslash_{\!\mathrm{Ad}}G
      })
      \ar[
        r,
        <-,
        "{
          (\mathrm{pr}_1,\mathrm{pr}_2)
        }"
      ]
      &
      \bracket({G \times G})
      \sslash_{\!\mathrm{Ad}}G
      \ar[
        r,
        "{ 
          m :=
          (-)\cdot(-) 
        }"
      ]
      &[-15pt]
      G\sslash_{\!\mathrm{Ad}}G
      \\
      \bracket({F,F'})
      \ar[
        rr,
        |->,
        shorten=10pt
      ]
      &&
      m_!\bracket({
        \bracket({\mathrm{pr}_1^\ast F})
        \otimes
        \bracket({\mathrm{pr}_2^\ast F})
      })
      \mathrlap{\,,}
    \end{tikzcd}
  \end{equation}
  in that there is a natural isomorphism
  \begin{equation}
    F_1 \fusion F_2
    \simeq
      m_!\bracket({
        \bracket({\mathrm{pr}_1^\ast F})
        \otimes
        \bracket({\mathrm{pr}_2^\ast F})
      })    
    \mathrlap{\,.}
  \end{equation}
\end{lemma}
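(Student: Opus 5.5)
The plan is to compute both sides explicitly as $G$-graded vector spaces with $G$-action (that is, as functors on $\Lambda G$) and exhibit a natural isomorphism. First I unwind the left side. Under \cref{DrinfeldCenterAsInertiaGroupoidRepresentations}, $F_i$ corresponds to $(V^{(i)},\beta^{(i)})$ with $V^{(i)}_g = F_i(g)$ and $k\cdot(-) = F_i(k)$. By \eqref{TensorProductOnGradedVectorSpaces} the underlying graded space of $F_1\fusion F_2$ is
\[
  \bracket({F_1\fusion F_2})(g) \;=\; \bigoplus_{g_1 g_2 = g} F_1(g_1)\otimes F_2(g_2)
  \mathrlap{\,.}
\]
Next I identify the induced action. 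Evaluating the composite half-braiding \eqref{TensorProductInDrinfeldCenter} on $\mathbb{C}\delta_k$ shows that $k$ acts on this space by $k\cdot(v_1\otimes v_2) = (k\cdot v_1)\otimes(k\cdot v_2)$, mapping the summand indexed by $(g_1,g_2)$ to the one indexed by $(g_1^k,g_2^k)$. This follows from the degree bookkeeping used before \cref{DrinfeldCenterAsInertiaGroupoidRepresentations}: first $\beta^{(1)}$ moves $\mathbb{C}\delta_k$ past $V^{(1)}$, then $\beta^{(2)}$ moves it past $V^{(2)}$. Associators are trivial, so no cocycle appears.

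Then I unwind the right side. The groupoid $(G\times G)\sslash_{\mathrm{Ad}} G$ has objects $(g_1,g_2)$, and $k$ acts diagonally. The pullbacks satisfy $\mathrm{pr}_i^\ast F_i(g_1,g_2)=F_i(g_i)$, and the degreewise tensor product is the functor $(g_1,g_2)\mapsto F_1(g_1)\otimes F_2(g_2)$ with the diagonal action. The key observation is that $m$ is a \emph{Kan fibration} of action groupoids: it is induced by a $G$-equivariant map of $G$-sets, so any morphism $k: g\to g^k$ lifts uniquely at any preimage $(g_1,g_2)$, namely to $k:(g_1,g_2)\to(g_1^k,g_2^k)$; in fact $m$ is a discrete fibration. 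Hence, as recalled in \cref{OnInducedRepresentations}, $m_!$ is computed as the direct sum over fibers,
\[
  \bracket({m_! H})(g)=\bigoplus_{(g_1,g_2)\in m^{-1}(g)} H(g_1,g_2)
  \mathrlap{\,,}
\]
with $k$ acting by transporting summands along these unique lifts. Comparing the two descriptions gives the same graded spaces and the same actions. The identity map is therefore the isomorphism; it is natural in $F_1,F_2$ because both constructions act componentwise on morphisms.

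I expect the main obstacle to be justifying that $m_!$, defined as the left adjoint (left Kan extension) to $m^\ast$, really equals the fiberwise direct sum. The paper only sketches this for $\mathbf{B}H\to\mathbf{B}G$. My first approach is to verify the adjunction directly for discrete fibrations. A natural transformation out of $\bigoplus_{m^{-1}} H$ into $K$ is a family of maps $H(g_1,g_2)\to K(g_1g_2)$, and equivariance over $G$ is exactly naturality in $(G\times G)\sslash G$, that is, a map $H\to m^\ast K$. As a fallback, I would apply \eqref{GroupoidEquivalentToDisjointUnionOfDeloopings} orbitwise: each orbit of $m^{-1}(g)$ under $Z_G(g)$ gives an induced representation $\mathrm{Ind}_{Z_G(g_1,g_2)}^{Z_G(g)}$. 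Checking the action on the left side through the half-braiding is a routine second point.
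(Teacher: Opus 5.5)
Your proposal is correct and follows the same route as the paper. You identify $m$ as a Kan fibration (indeed a discrete one), so that $m_!$ is the direct sum over fibers; this recovers the convolution grading \eqref{TensorProductOnGradedVectorSpaces}, and you then match the diagonal $G$-action with the composite half-braiding \eqref{TensorProductInDrinfeldCenter}. Your direct check of the adjunction for discrete fibrations spells out the step the paper calls ``immediate to check''.
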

\begin{proof}
  Since the functor $m$ is a Kan fibration, the left base change $m_!$ is given by forming the direct sum over fibers. (It is immediate to check the universal property of the left adjoint explicitly.) This reproduces the definition \eqref{TensorProductOnGradedVectorSpaces}. The induced $G$-action on these direct sums is the diagonal one, which under the equivalence \eqref{DrinfeldCenterEquivalentToInertiaGroupoidReps} reproduces the definition \eqref{TensorProductInDrinfeldCenter}. 
\end{proof}

\begin{lemma}
  \label[lemma]{OnTheDecompositionOfProductOrbits}
  For $g_1,g_2 \in G$, with $[g_1] \times [g_2] \subset G^2$ denoting the Cartesian product of their conjugacy classes \eqref{ConjugacyClass}, the action groupoid \cref{ActionGroupoid} of the diagonal adjoint action is equivalent to the following disjoint union of delooping groupoids \cref{DeloopingGroupoid}:
\begin{equation}
  \label{DecompositionProductOrbits}
  \bracket({
    [g_1]
      \times
    [g_2]
  })
  \sslash_{\!\mathrm{Ad}} G
  \;\;
  \simeq
  \bigsqcup_{
    \!\!
    \substack{
      [g] \in
      \mathrm{Conj}(G)
      \\
      [k_1,k_2] 
      \in 
      \FusionMultiplicity_{g_1, g_2}^g
    }
    \!\!
  }
  \mathbf{B}\bracket({
    Z_G\bracket({
      g_1^{k_1}, g_2^{k_2}
    }) 
  })
  \mathrlap{\,,}
\end{equation}
where the direct sum on the right is indexed over the set \cref{DoubleCosetInProductOrbitDecomposition}:
$$
  \FusionMultiplicity_{g_1,g_2}^g
  :=
  \bracketmid\{{
    [k_1,k_2] \in
    Z_G(g) 
    \big\backslash  
    G^2 
     \big/ 
    \bracket({Z_G(g_1) \times Z_G(g_2)})
  }{
    g_1^{k_1} g_2^{k_2} = g
  }\}
  \mathrlap{\,.}
$$
\end{lemma}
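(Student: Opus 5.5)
The plan is to reduce the statement to the general decomposition \eqref{GroupoidEquivalentToDisjointUnionOfDeloopings}. That result says any groupoid is equivalent to the disjoint union, over its connected components, of the deloopings of the isotropy groups at chosen representatives. So for the action groupoid $\bracket({[g_1]\times[g_2]})\sslash_{\!\mathrm{Ad}} G$ two things must be identified:
\begin{enumerate}
\item its set of connected components, which is the set of diagonal $\mathrm{Ad}$-orbits in $[g_1]\times[g_2]$;
\item the isotropy group at a suitable representative of each orbit.
\end{enumerate}
The isotropy part is immediate from \eqref{IsotropyGroup}. The stabilizer of a pair $(q_1,q_2)$ under diagonal conjugation is $\{k \in G \mid k q_1 k^{-1} = q_1,\ k q_2 k^{-1} = q_2\} = Z_G(q_1)\cap Z_G(q_2) = Z_G(q_1,q_2)$. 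Taking representatives of the form $(q_1,q_2) = (g_1^{k_1}, g_2^{k_2})$ gives exactly the factors $\mathbf{B}\bracket({Z_G\bracket({g_1^{k_1},g_2^{k_2}})})$ on the right of \eqref{DecompositionProductOrbits}.

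For the components, I will organize the orbits in two steps.
\begin{enumerate}
\item Multiplication $m\colon (q_1,q_2)\mapsto q_1q_2$ is $\mathrm{Ad}$-equivariant. It therefore induces a map from the set of orbits to $\mathrm{Conj}(G)$, and the orbits split as a disjoint union of the fibers of this map over $[g] \in \mathrm{Conj}(G)$.
\item The fiber over $[g]$ is in bijection with $\FusionMultiplicity_{g_1,g_2}^g$, by \cref{TheFusionMultiplicity}. Through the map $\phi$ in its proof, the class $[k_1,k_2]$ corresponds to the orbit of $(g_1^{k_1},g_2^{k_2})$.
\end{enumerate}
Combining the two steps, the components are indexed by pairs $\bracket({[g],[k_1,k_2]})$ with $[k_1,k_2]\in\FusionMultiplicity_{g_1,g_2}^g$, and each component has representative $(g_1^{k_1},g_2^{k_2})$. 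Note that for $[g]$ outside the image of $m$ the set $\FusionMultiplicity_{g_1,g_2}^g$ is empty, so those $[g]$ contribute nothing. Substituting into \eqref{GroupoidEquivalentToDisjointUnionOfDeloopings} yields \eqref{DecompositionProductOrbits}.

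The main obstacle is well-definedness: the representative must not depend, up to the relevant isomorphism, on the choice of $(k_1,k_2)$ within its double coset.
\begin{enumerate}
\item Changing $(k_1,k_2)$ by right multiplication by $Z_G(g_1)\times Z_G(g_2)$ leaves $(g_1^{k_1},g_2^{k_2})$ unchanged.
\item Changing it by left multiplication by $z\in Z_G(g)$ replaces the pair by its diagonal conjugate by $z$. The product $g$ is preserved, because $z$ centralizes $g$.
\item This conjugation identifies $Z_G(g_1^{k_1},g_2^{k_2})$ with $Z_G(g_1^{zk_1},g_2^{zk_2})$ via $\mathrm{Ad}_z$. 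By \eqref{IsotropyRepsTransformWithBasePoints}, such a change of base point does not affect the equivalence.
\end{enumerate}
So the delooping factor is canonical up to isomorphism. The one point to check carefully is that \cref{TheFusionMultiplicity} really requires the quotient by $Z_G(g)$ rather than by all of $G$. This holds because representatives are normalized to have product exactly $g$, and the residual freedom preserving that normalization is precisely $Z_G(g)$.
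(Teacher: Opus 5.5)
Your proposal is correct and follows essentially the same route as the paper. The paper likewise uses \cref{TheFusionMultiplicity} to index the connected components of $([g_1]\times[g_2])\sslash_{\!\mathrm{Ad}}G$ by pairs $([g],[k_1,k_2])$ with representatives $(g_1^{k_1},g_2^{k_2})$, notes that the isotropy group there is the joint centralizer, and concludes by \eqref{GroupoidEquivalentToDisjointUnionOfDeloopings}; your checks on double-coset representatives and on the role of $Z_G(g)$ just spell out what the paper leaves implicit.
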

\begin{proof}
By \cref{TheFusionMultiplicity}, the disjoint union on the right of \eqref{DecompositionProductOrbits} is indexed by the set of orbits of the groupoid on the left. Since $Z_G\bracket({g_1^{k_1}, g_2^{k_2}})$ is clearly the isotropy group of the given connected component, the claim follows by \eqref{GroupoidEquivalentToDisjointUnionOfDeloopings}.
\end{proof}

We now have the following formula for the \emph{fusion coefficients} 
$$
  N
    ^{([g],\rho)}
    _{ ([g_1],\rho_1), ([g_2],\rho_2)}
  \in
  \mathbb{N}
  \mathrlap{\,,}
$$
which count the multiplicity of simple objects appearing in the fusion product of pairs of simple objects:
\begin{equation}
  \label{FusionRule}
  \bracket({
    [g_1], \rho_1
  })
\,  \fusion \,
  \bracket({
    [g_2], \rho_2
  })
  \simeq
  \bigoplus_{
    \scaledbracket({[g],\rho})
  }
  N
    ^{([g],\rho)}
    _{ ([g_1],\rho_1), ([g_2],\rho_2)}
  \,
  \bracket({[g],\rho})
  \mathrlap{\,.}
\end{equation}

\begin{proposition}
\label[proposition]{FusionCoefficientsInDrinfeldCenterOfGVectorSpaces}
The fusion coefficients \eqref{FusionRule} in $\mathcal{Z}\bracket({\mathrm{Vec}_G})$
are given by:
\begin{equation}
  \label{TheFusionCoefficientsInDrinfeldCenterOfGVectorSpaces}
  N
    ^{([g],\rho)}
    _{ ([g_1],\rho_1), ([g_2],\rho_2)}
  =
  \sum_{
    [k_1,k_2] 
      \in 
    \FusionMultiplicity_{g_1, g_2}^g
  }
  \!\!\!
  \mathrm{dim}\,
  \mathrm{Hom}_{
    Z_G\bracket({\scriptstyle
      g_1^{k_1}, g_2^{k_2}
    })
  }
  \bracket({
    \rho_1^{k_1}
    \otimes
    \rho_2^{k_2},
    \rho
  })
  \mathrlap{\,,}
\end{equation}
where $
  \FusionMultiplicity_{g_1,g_2}^g
  :=
  \bracketmid\{{
    [k_1,k_2] \in
    Z_G(g) 
    \big\backslash  
    G^2 
     \big/ 
    \bracket({Z_G(g_1) \times Z_G(g_2)})
  }{
    g_1^{k_1} g_2^{k_2} = g
  }\}
$ \eqref{DoubleCosetInProductOrbitDecomposition}.
\end{proposition}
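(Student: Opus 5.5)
We compute the multiplicity of $([g],\rho)$ in $([g_1],\rho_1)\fusion([g_2],\rho_2)$ using the pull--tensor--push description of \cref{FusionViaPullPush}. Regard $F_i$, the groupoid representation of $\Lambda G$ corresponding to $([g_i],\rho_i)$ under \cref{SimpleObjects}, as supported on the component $[g_i]\sslash_{\!\mathrm{Ad}}G$. Then $\mathrm{pr}_1^\ast F_1\otimes\mathrm{pr}_2^\ast F_2$ is supported on the full subgroupoid $([g_1]\times[g_2])\sslash_{\!\mathrm{Ad}}G \subset (G\times G)\sslash_{\!\mathrm{Ad}}G$. Hence $F_1\fusion F_2 \simeq m_!$ of that restricted representation, with $m$ now restricted to this subgroupoid.

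Next, we apply \cref{OnTheDecompositionOfProductOrbits} to split this subgroupoid into connected components. These are indexed by $[g]\in\mathrm{Conj}(G)$ and $[k_1,k_2]\in\FusionMultiplicity_{g_1,g_2}^g$, with the component for $[k_1,k_2]$ represented by the object $(g_1^{k_1},g_2^{k_2})$. Since $m_!$ is additive over a disjoint union of its source, the fusion product becomes a direct sum over these components. The component labelled $[k_1,k_2]$ maps under $m$ into $[g]\sslash_{\!\mathrm{Ad}}G$, because $g_1^{k_1}g_2^{k_2}=g$.

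On a single component, we use the equivalences \eqref{GroupoidEquivalentToDisjointUnionOfDeloopings} to identify the source with $\mathbf{B}Z_G(g_1^{k_1},g_2^{k_2})$ and the target with $\mathbf{B}Z_G(g)$. Under these identifications $m$ becomes the inclusion $o: Z_G(g_1^{k_1},g_2^{k_2})\hookrightarrow Z_G(g)$, which makes sense since elements commuting with both factors commute with their product. By the discussion in \cref{OnInducedRepresentations}, $m_!$ restricted there is the induction $o_!$. The pulled-back representation at the base object $(g_1^{k_1},g_2^{k_2})$ is the restriction of $\rho_1^{k_1}\otimes\rho_2^{k_2}$. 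Here $\rho_i^{k_i}$ is the representation of $Z_G(g_i^{k_i})$ obtained by transporting $\rho_i$ along $k_i$, as in \eqref{RepsTransformUnderGroupConjugation} and \eqref{IsotropyRepsTransformWithBasePoints}. Frobenius reciprocity $o_!\dashv o^\ast$ then gives
\[
\dim\mathrm{Hom}_{Z_G(g)}\bigl(o_!(\rho_1^{k_1}\otimes\rho_2^{k_2}),\rho\bigr)=\dim\mathrm{Hom}_{Z_G(g_1^{k_1},g_2^{k_2})}\bigl(\rho_1^{k_1}\otimes\rho_2^{k_2},\rho\bigr).
\]
Summing over $[k_1,k_2]$ yields \eqref{TheFusionCoefficientsInDrinfeldCenterOfGVectorSpaces}. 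To pass from Hom-dimensions to multiplicities we use complete reducibility, which holds for finite $G$. For general $G$ the formula is read as a Hom-dimension count.

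The main obstacle is the bookkeeping of base points. We must check that the value of $\mathrm{pr}_1^\ast F_1\otimes\mathrm{pr}_2^\ast F_2$ at $(g_1^{k_1},g_2^{k_2})$, with its isotropy action, is really $\rho_1^{k_1}\otimes\rho_2^{k_2}$, and that the summand is independent of the chosen representative $(k_1,k_2)$ of the double coset. For the first point, we transport along the morphisms $k_i: g_i\to g_i^{k_i}$ in $\Lambda G$ and use functoriality of $F_i$. For the second, changing $(k_1,k_2)$ by $Z_G(g_1)\times Z_G(g_2)$ on the right changes nothing up to isomorphism, because $\rho_i$ is a representation of $Z_G(g_i)$. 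Changing it by $Z_G(g)$ on the left conjugates everything by an element that fixes $g$ and $\rho$ up to the isomorphism $\rho(z)$, so the Hom-dimension is unchanged.
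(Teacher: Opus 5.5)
Your proposal is correct and follows essentially the same route as the paper. Both arguments use the pull--tensor--push description of \cref{FusionViaPullPush}, split $([g_1]\times[g_2])\sslash_{\!\mathrm{Ad}}G$ into components via \cref{OnTheDecompositionOfProductOrbits}, identify $m_!$ on each component with induction along $Z_G(g_1^{k_1},g_2^{k_2})\hookrightarrow Z_G(g)$, and conclude by Frobenius reciprocity; your extra checks (independence of the double-coset representative, and complete reducibility to convert Hom-dimensions into multiplicities) make explicit what the paper leaves implicit in its appeal to base-point transport and Schur's lemma.
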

\begin{proof}
  By Lem. \ref{FusionViaPullPush}, the fusion product of simple objects \eqref{TheSimpleObjects} is given by pull-push through
  $$
    \begin{tikzcd}[
      column sep=25pt,
      row sep=0pt
    ]
      \bracket({
        [g]\sslash_{\!\mathrm{Ad}}G
      })
      \times
      \bracket({
        [g']\sslash_{\!\mathrm{Ad}}G
      })
      \ar[
        <-,
        rr,
        "{
          (\mathrm{pr}_1,\mathrm{pr}_2)
        }"
      ]
      &&
      \bracket({
        [g]\times [g']
      })\sslash_{\!\mathrm{Ad}} 
      G
      \ar[
        r,
        "{ m }"
      ]
      &
      G \sslash_{\!\mathrm{Ad}} G
      \mathrlap{\,.}
    \end{tikzcd}
  $$
  
By Lemma \ref{OnTheDecompositionOfProductOrbits} and with \eqref{IsotropyRepsTransformWithBasePoints} and \eqref{RepsTransformUnderGroupConjugation}, this yields on the orbit $[g]$ the isotropy representation
\begin{equation}
  \bigoplus_{
    [k_1, k_2] 
      \in 
    R_{g_1, g_2}^g
  }
  \!\!\!
  \mathrm{Ind}
    _{Z_G(g_1^{k_1}, g_2^{k_2})}
    ^{Z_G(g)}
  \bracket({
    \mathrm{Res}
      ^{Z_G(g_1^{k_1}, g_2^{k_2})}
      _{Z_G(g)}
    \bracket({
      \rho_1^{k_1}
    })
    \otimes
    \mathrm{Res}
      ^{Z_G(g_1^{k_1}, g_2^{k_2})}
      _{Z_G(g)}
    \bracket({
      \rho_2^{k_2}
    })
  })
  \mathrlap{\,,}
\end{equation}
where we are using, with \cref{OnInducedRepresentations}, that on group representations the left base change $m_!$ is given by forming induced representations.

This yields the claim: By Schur's lemma, the multiplicity of the irrep $\rho$ in this expression is the dimension of the hom-space from the latter to the former. Finally, by Frobenius reciprocity the induction is left adjoint to restriction, whence we have \eqref{TheFusionCoefficientsInDrinfeldCenterOfGVectorSpaces}.
\end{proof}

\begin{remark}
  \label[remark]{LiteratureOnDrinfeldFusionProduct}
  Up to immediate translation, \cref{FusionCoefficientsInDrinfeldCenterOfGVectorSpaces} coincides with \cite[Thm. 3.2]{Li2026}, proven there by different methods (Mackey theory applied to the Drinfeld double, cf. Rem. \ref{TheDrinfeldDouble}). As remarked there, this is in turn the untwisted specialization of a formula given by \cite[Thm. 4.5]{Goff2012}.
\end{remark}

\medskip 
\subsection*{Acknowledgments}
We thank 
Alonso Perez-Lona 
and 
Sachin Valera
for useful discussion. 

\medskip 
\printbibliography

\end{document}